\DeclareMathOperator*{\minimize}{\mathrm{minimize}}
\DeclareMathOperator*{\subjectto}{\mathrm{subject~to}}
\title{\LARGE \bf
Closing the Loop on Runtime Monitors with Fallback-Safe MPC
}
\author{Rohan Sinha, Edward Schmerling, and Marco Pavone
\thanks{The authors are with the Dept. of Aeronautics and Astronautics at Stanford University, Stanford, CA. \texttt{\{rhnsinha, schmrlng, pavone\}@stanford.edu}. The NASA University Leadership initiative (grant \#80NSSC20M0163) provided funds to assist the authors with their research, but this article solely reflects the opinions and conclusions of its authors and not any NASA entity.}%
}
\begin{document}

\maketitle
\thispagestyle{empty}
\pagestyle{empty}

\begin{abstract}
When we rely on deep-learned models for robotic perception, we must recognize that these models may behave unreliably on inputs dissimilar from the training data, compromising the closed-loop system's safety. This raises fundamental questions on how we can assess confidence in perception systems and to what extent we can take safety-preserving actions when external environmental changes degrade our perception model's performance. Therefore, we present a framework to certify the safety of a perception-enabled system deployed in novel contexts. To do so, we leverage robust model predictive control (MPC) to control the system using the perception estimates while maintaining the feasibility of a safety-preserving fallback plan that does not rely on the perception system. In addition, we calibrate a runtime monitor using recently proposed conformal prediction techniques to certifiably detect when the perception system degrades beyond the tolerance of the MPC controller, resulting in an end-to-end safety assurance. We show that this control framework and calibration technique allows us to certify the system's safety with orders of magnitudes fewer samples than required to retrain the perception network when we deploy in a novel context on a photo-realistic aircraft taxiing simulator. Furthermore, we illustrate the safety-preserving behavior of the MPC on simulated examples of a quadrotor. We open-source our simulation platform and provide videos of our results at our project page: \url{https://tinyurl.com/fallback-safe-mpc}.
\end{abstract}

\newcommand{\calZ}{\mathcal{Z}}
\newcommand{\calY}{\mathcal{Y}}
\newcommand{\calX}{\mathcal{X}}
\newcommand{\calW}{\mathcal{W}}
\newcommand{\calV}{\mathcal{V}}
\newcommand{\calU}{\mathcal{U}}
\newcommand{\calT}{\mathcal{T}}
\newcommand{\calS}{\mathcal{S}}
\newcommand{\calR}{\mathcal{R}}
\newcommand{\calQ}{\mathcal{Q}}
\newcommand{\calP}{\mathcal{P}}
\newcommand{\calO}{\mathcal{O}}
\newcommand{\calN}{\mathcal{N}}
\newcommand{\calM}{\mathcal{M}}
\newcommand{\calL}{\mathcal{L}}
\newcommand{\calK}{\mathcal{K}}
\newcommand{\calJ}{\mathcal{J}}
\newcommand{\calI}{\mathcal{I}}
\newcommand{\calH}{\mathcal{H}}
\newcommand{\calG}{\mathcal{G}}
\newcommand{\calF}{\mathcal{F}}
\newcommand{\calE}{\mathcal{E}}
\newcommand{\calD}{\mathcal{D}}
\newcommand{\calC}{\mathcal{C}}
\newcommand{\calB}{\mathcal{B}}
\newcommand{\calA}{\mathcal{A}}

\newcommand{\todo}[1]{{\color{red}{TODO: #1}}}

\newcommand{\E}{\mathbb{E}}
\newcommand{\p}{\mathbb{P}}
\newcommand{\R}{\mathbb{R}}
\newcommand{\Ps}{\mathcal{P}}

\newcommand{\calhA}{\widehat{\mathcal{A}}}
\newcommand{\calhU}{\widehat{\mathcal{U}}}

\newcommand{\Qhat}{\widehat{Q}}
\newcommand{\Vhat}{\widehat{{V}}}

\newcommand{\prob}{\mathrm{Prob}}
\newcommand{\iid}{\overset{\textup{iid}}{\sim}}

\newcommand{\pe}{\bm{e}}
\newcommand{\x}{\bm{x}}
\newcommand{\y}{\bm{y}}
\newcommand{\z}{\bm{z}}
\newcommand{\f}{\bm{f}}
\newcommand{\g}{\bm{g}}
\newcommand{\h}{\bm{h}}
\newcommand{\w}{\bm{w}}
\newcommand{\vn}{\bm{v}}
\newcommand{\fu}{\bm{u}}
\newcommand{\traj}{\bm{\tau}}
\newcommand{\params}{\bm{\theta}}

\newcommand{\pre}{\mathrm{Pre}}

\newcommand{\nn}{\mathrm{NN}}

\newtheorem{claim}{Claim}
\newtheorem{remark}{Remark}
\newtheorem{lemma}{Lemma}
\newtheorem{theorem}{Theorem}
\newtheorem{corollary}{Corollary}
\newtheorem{definition}{Definition}
\newtheorem{assumption}{Assumption}
\newtheorem{example}{Example}
\section{INTRODUCTION}
Autonomous robotic systems increasingly rely on machine learning (ML)-based components to make sense of their environment. In particular, deep-learned perception models have become indispensable to extract task-relevant information from high-dimensional sensor streams (e.g., images, pointclouds). However, it is well known that modern ML systems can behave erratically and unreliably on data that is dissimilar from the training data --- inputs commonly termed out-of-distribution (OOD) \cite{SinhaSharmaEtAl2022,GeirhosJacobsenEtAl2020, TorralbaEfros2011}. During deployment, ML-enabled robots inevitably encounter OOD inputs corresponding to edge cases and rare, anomalous scenarios \cite{SinhaSharmaEtAl2022, SeshiaSadighEtAl2016}, which pose a significant safety risk to ML-enabled robots.

In this work we examine vision-based control settings, where we rely on a deep neural-network (DNN) to extract task-relevant information from a high-dimensional image observation. When the DNN fails, access to this information is lost, so that we can no longer estimate the full state.
For example, the drone in \cref{fig:hero-fig} relies on a DNN for obstacle detection and subsequent avoidance, and the aircraft in \cref{fig:example} utilizes a DNN to estimate its runway position for tracking control.
Because failures caused by OOD data are difficult to anticipate, recent years have seen much progress on algorithms that monitor the performance of ML-enabled components at runtime \cite{RahmanCorkeEtAl2021,SalehiMirzaeiEtAl2021,RuffKauffmanEtAl2021}. These OOD detection algorithms aim to detect inference errors so that downstream safety-preserving interventions may be adopted. 
However, few works attempt to integrate such monitors into a perception and control stack. Instead, existing work typically assumes that estimation errors will always satisfy known bounds or that there exists a safe fallback that can always be triggered under loss of sensing. 
To derive end-to-end certificates on the safety of the monitor-in-the-loop system, two key challenges must be addressed: (1) an OOD detector must be calibrated to detect violations of assumptions underpinning the nominal control design and (2) the control strategy itself must be aware of the limitations of a safety-preserving intervention. This latter challenge is of particular importance for safety, since, as illustrated in \cref{fig:hero-fig}, naively executing a specified fallback (e.g., landing the drone) can introduce additional hazards.


\begin{figure}
    \centering
    \includegraphics[width=\textwidth]{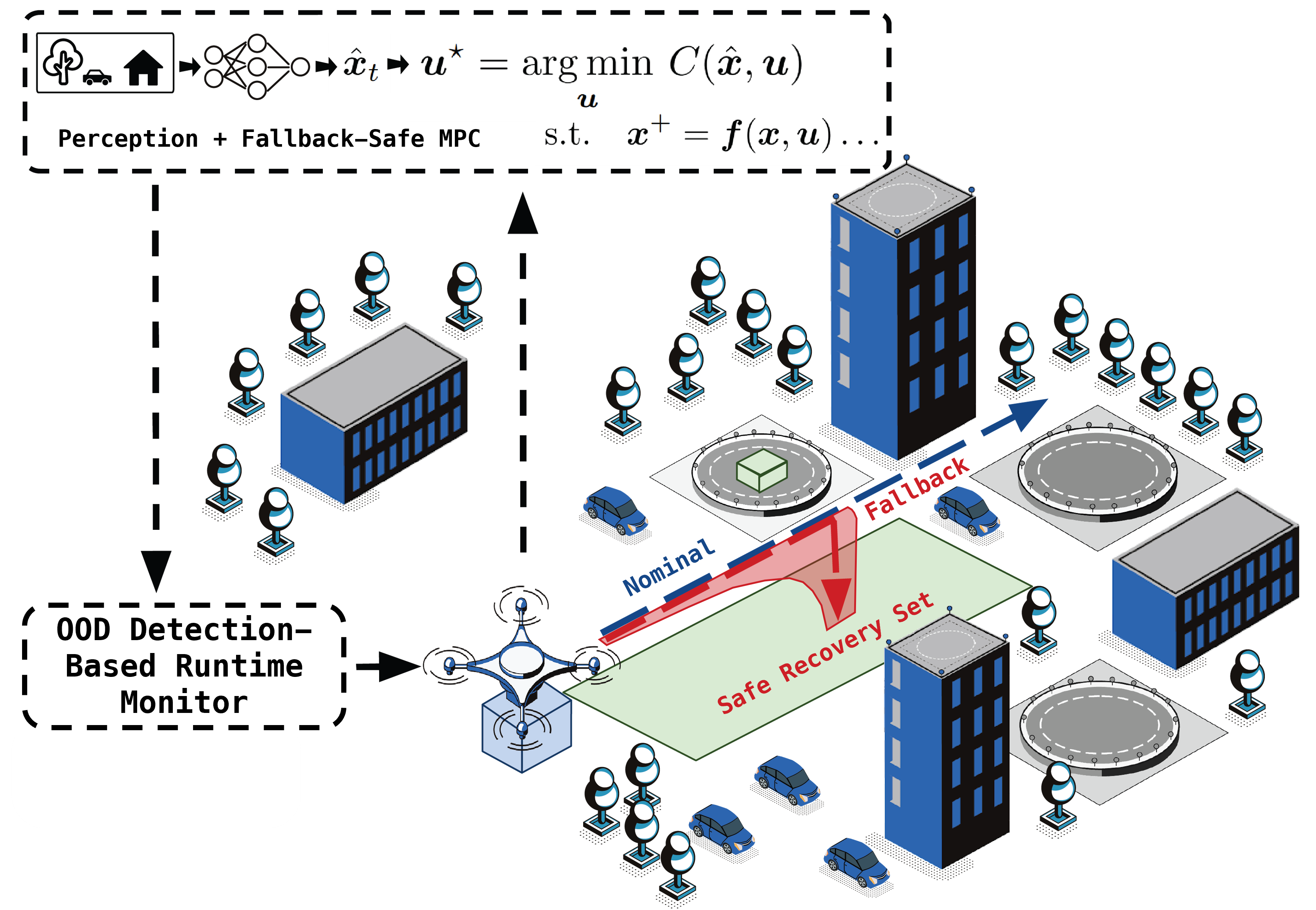}
    \caption{Overview of the proposed approach: A drone delivery service uses camera-vision to navigate around a city. However, the ML perception system behaves unreliably on out-of-distribution (OOD) inputs. Therefore, we construct a runtime monitor to trigger a fallback strategy when the perception system is unreliable. To do so, we calibrate heuristic OOD detection scores to decide when to land the drone. To operate this fallback safely, we must ensure that it does not drop down into trees or roads. Therefore, we minimally modify the nominal system operation to ensure we fallback into a safe recovery set using a robust MPC.}
    \label{fig:hero-fig}
\end{figure}
\begin{figure*}[t!]
  \centering
  \includegraphics[width=.9\textwidth]{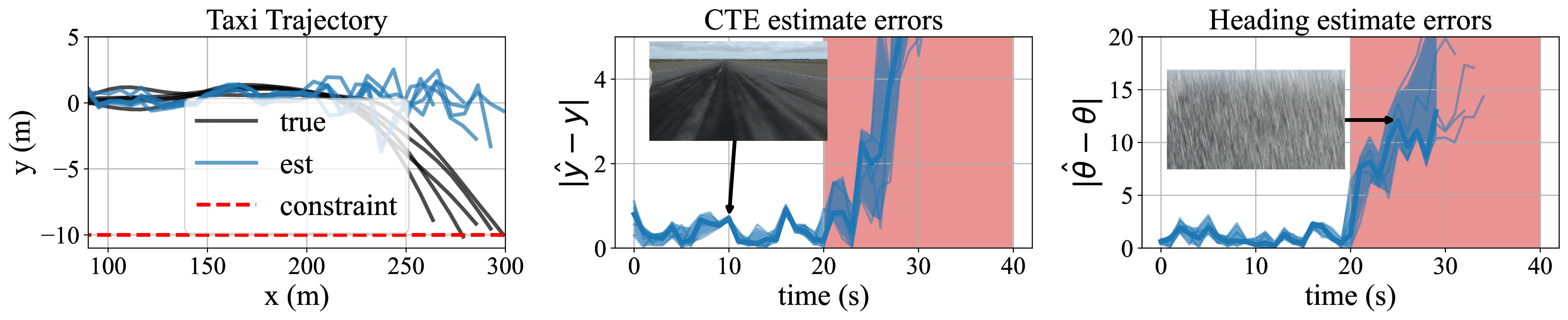}
  \caption{20 simulated trajectories of an autonomous aircraft taxiing down a runway. A DNN -- trained on data collected in morning, clear-sky weather -- estimates the cross-track error (CTE) and the heading error (HE) used by a tracking controller from vision. At the start of the trajectories, the weather is clear. At $t=20s$, it starts raining. Leftmost plot: $x$ is the down-track position along the runway, and $y$ is the CTE. 
  Middle and Rightmost plots: Errors of the DNN perception cross-track error and heading error estimate. In both cases, the estimate is initially of sufficient quality. However, when it starts raining, estimates immediately degrade (OOD time steps highlighted in red), causing the robot to fail (running off the runway) for all trajectories. \vspace{-2em}}
  \label{fig:example}
\end{figure*}
To address these challenges, we present the Fallback-Safe Model Predictive Control (MPC) framework which, while maintaining safety, aims to derive maximum utility from the DNN component necessary for nominal task success. Our framework satisfies three key desiderata associated with the above challenges, namely: (1) we ensure the safety of the fallback strategy, i.e., we do not assume the existence of a ``catch-all'' fallback, (2) we explicitly quantify DNN and runtime monitor performance to inform control without resorting to conservative worst-case assumptions and (3) in the context of robust control, we account for the existence of errors, i.e., DNN failures, of arbitrary severity through the development of a runtime monitor.

In this framework, we first specify an error bound on the quality of state estimates produced by the ML perception system when it operates in-distribution (i.e., in nominal conditions). In nominal conditions, we robustly control the system with respect to the specified perception error bound. Then, we calibrate an OOD detection heuristic to trigger a safety-preserving fallback strategy when the chosen perception error bound is violated, resulting in an end-to-end guarantee that the system will satisfy state and input constraints with high probability, regardless of DNN failure. This calibration procedure does require examples of such OOD cases, though notably the strength of our guarantees scales with calibration dataset size, irrespective of DNN complexity (as would, e.g., retraining on OOD cases). 
As such, the setting we consider captures the practice of running a few pre-deployment trials when we want to deploy a pre-trained system in a new context, which we know differs from the training distribution. For example, we can anticipate shifted conditions and run some calibration trials when we deploy an autonomous aircraft trained on images of American runways in Europe or when we deploy an image classifier trained on ImageNet as a part of our autonomy stack.

\textbf{Related Work:} Most similar to our approach are several works that consider triggering a fallback controller by thresholding an OOD detection algorithm \cite{RichterRoy2017, FilosTigas2020, McAllisterKahn2019}. These works use fallback strategies that are domain-specific or naively assumed to always be safe, like a hand-off to a human, and some approaches assume the ML models function perfectly nominally \cite{RichterRoy2017}. Moreover, to detect OOD conditions, they either rely on additional DNNs for OOD detection \cite{RichterRoy2017, McAllisterKahn2019}, or use approximate techniques to quantify uncertainty \cite{FilosTigas2020}, and thus do not make strong guarantees of system safety. 
Similar to existing work on fault-tolerant control that maintains feasibility of passive-backup \cite{GuffantiAmico2023}, abort-safe \cite{MarsillachCairano2020}, or contingency plans \cite{AlsterdaBrownEtAl2019} under actuation or sensor failure using MPC, we modify the nominal operation to ensure the existence of a safety preserving fallback. However, in many such works, it is assumed that faults are perfectly detected and that these systems function perfectly nominally. It is challenging to detect failures in ML-based systems and, as illustrated in \cref{fig:example}, errors are nominally tolerable, but nonzero. Therefore, we jointly design the control stack and runtime monitor to account for such errors. 


Our approach takes inspiration from \emph{safety filters} as defined in \cite{BrunkeGreefEtAl2022}. Such methods minimally modify a black-box policy's actions to ensure invariance of a safe subset of the state space when the black-box policy would take actions to leave that set. Many such algorithms have been proposed in recent years, for example by using robust MPC \cite{WabersichZeilinger2018}, control barrier functions (CBFs) \cite{ChengOrosz2019} and Hamilton-Jacobi reachability analysis \cite{FisacAkametaluEtAl2019, LeungSchmerlingEtAl2019}. However, our method differs from such approaches in two important ways:
First, existing safety filters operate under assumptions of perfect state knowledge, or that an estimate of known accuracy is always available \cite{BrunkeZhou2022}. However, ML-enabled components for perception are necessary to complete the control task in many applications. When these components fail, it becomes impossible to estimate the full state (e.g., see \cref{fig:hero-fig}, \cref{fig:example}). 
We account for these discrete information modes by defining \emph{recovery sets} to fallback into, which formalize the intuition that e.g., the drone in \cref{fig:hero-fig} does not need an obstacle detector to avoid mid-air collisions when it is landed in a field. 
Secondly, existing safety filters take a zero-confidence view in black-box learned components: They continually ensure the safe operation of the system by aligning an ML-enabled controller's output with those of a backup policy that never uses ML models. Instead, we recognize that ML-enabled components are generally reliable, but leverage OOD detection to transition to a fallback strategy in rare failure modes.

Robust control from imperfect measurements, output-feedback, classically relies on state estimators that persistently satisfy known error bounds, constructed using known measurement models with assumptions on system observability. In particular, output-feedback MPC controllers robustly satisfy state and input constraints for all time (e.g., see \cite{MayneRakovic2006,LorenzettiPavone2020,LovaasSeronEtAl2008,KohlerMuller2021,FindeisenImslandEtAl2003,GoulartKerrigan2006}). Typically, these methods robustly control the state estimate dynamics using a robust MPC algorithm, tightening constraints to account for the estimation error \cite{MayneRakovic2006, LorenzettiPavone2020, KohlerMuller2021, GoulartKerrigan2006}. However, we cannot model high-dimensional measurements like images from first principles, and as a result, we must rely on neural networks to extract scene information. Some recent approaches propose to learn the error behavior of a vision system as a function of the state and 
robustly plan while taking these error bounds into account \cite{DeanMatni209, ChouOzay2023, IchterLandry2020}, for example, by making smoothness assumptions on the vision's error behavior \cite{DeanMatni209}. These algorithms require the environment to remain fixed (i.e., that the mapping from state to image is constant). However, oftentimes external changes to the environment, like the changing weather in \cref{fig:example}, result in OOD inputs that cause arbitrarily poor perception errors.
We rely on the ideas of output-feedback MPC to nominally control the robot, but maintain feasibility to a backup plan in case the perception unexpectedly degrades.   

To make an end-to-end guarantee, we leverage recent results in conformal prediction to learn how to rely on a heuristic OOD detector (see \cite{AngelopoulosBates2022} for an overview). Conformal methods are attractive in this setting because they produce strong guarantees on the correctness of predictions, are highly sample efficient, and the guarantees are distribution-free -- that is, they do not depend on assumptions on the data-generating distribution \cite{AngelopoulosBates2022, BalasubramanianHoEtAll2014}.
However, existing conformal prediction methods do not make high-probability guarantees jointly over predictions along a dynamical system's trajectory, where inputs are highly correlated over time. While some recent work has aimed to move beyond the exchangeable data setting \cite{BarberCandes2023, TibshiraniBarber2019}, or makes sequentially valid predictions across exchangeable samples \cite{LuoSinha2023}, these methods cannot be applied sequentially over correlated observations within a trajectory. Instead, we adapt an existing algorithm, \cite{LuoZhao2023}, to yield a guarantee jointly over the repeated evaluations of the predictor within a trajectory. 

For a more extensive review of existing work (including additional approaches for coping with OOD data), see \cref{ap:relwork}.

\textbf{Contributions:} In brief, our contributions are that
\begin{enumerate}
    \item First, we introduce the notion of the safe recovery set, a safe subset of the state space that is invariant under a recovery policy that does not rely on the ML-enabled perception or a full state estimate.
    \item Second, we develop a framework to synthesize a fallback controller and modify the nominal operation of the robot to ensure the existence of a safe fallback strategy for all time by planning into a recovery set.
    \item Third, we propose a conformal prediction algorithm that calibrates the OOD detector, resulting in a runtime-monitor-in-the-loop framework for which we make an end-to-end safety guarantee. 
    \item Fourth, we evaluate our approach on vision-based aircraft taxiing simulations, where we find that our framework guarantees safety with a low false positive rate on OOD scenarios with orders of magnitudes fewer samples than we used to train the perception system. In addition, we open-source our simulation platform, based on the photorealistic X-Plane 11 simulator (available at: \url{https://tinyurl.com/fallback-safe-mpc}).
\end{enumerate}
Our paper is organized as follows: We outline our problem formulation in \cref{sec:problem}, and then construct our approach in \cref{sec:planning}-\cref{sec:conformal}. We first present the Fallback-Safe MPC framework in \cref{sec:planning} and present the conformal calibration algorithm in \cref{sec:conformal}. Finally, we present simulated results in \cref{sec:simulations}. We include an overview of all notation used in this paper in \cref{ap:glos}.

\section{PROBLEM FORMULATION}\label{sec:problem}
We consider discrete time dynamical systems 
\begin{equation}\label{eq:dynamics}
\begin{aligned}
    \x_{t+1} &= \f(\x_t, \fu_t, \w_t),\\
    (\w_t, \z_t) &\sim p_{\rho}(\w_t, \z_t | \w_{0:t-1}, \z_{0:t-1}, \x_{0:t}),
\end{aligned}
\end{equation}
where $\x_t\in \R^n$ is the system state, $\fu_t \in \R^m$ is the control input, $\w_t \in \calW \subseteq \R^d$ is a disturbance signal contained in the known compact set $\calW$ for all time steps $t\geq 0$, and $\z_t \in \R^o$ is a high dimensional observation consisting of an image and more conventional measurements (e.g., GPS), such that $o \gg n$. The joint distribution $p_\rho$ from which disturbances and observations are sampled during an episode depends on an unobserved environment variable $\rho \sim P_\rho$, drawn once at the start of each episode from an environment distribution $P_\rho$.

Our goal is to ensure the system satisfies state and input constraints over trajectories of finite duration.
\begin{definition}[Safety]\label{def:safety}
Under a risk tolerance $\delta \in [0,1]$ and time limit $t_{\mathrm{lim}} \in \mathbb{N}_{\geq0}$, the system~\cref{eq:dynamics} is \emph{safe} if
\begin{equation}
    \prob(\x_t \in \calX, \ \fu_t \in \calU, \ \forall t \in \{0,1,\dots, t_{\mathrm{lim}}\}) \geq 1 - \delta, 
\end{equation}
where $\calX \subseteq \R^n$, $\calU \subseteq \R^m$ are state and input constraint sets.
\end{definition}
\noindent Our setting differs from classic output feedback in that the high-dimensional $\z_t$ cannot be used directly for control.
Instead, we consider the setting where a black-box perception system generates an estimate of the state at each time step.
\begin{assumption}[Perception]\label{as:percept}
At each time step $t\geq 0$ a perception system produces an estimate of the state $\hat{\x}_t := \mathrm{perception}(\z_{0:t})$.
\end{assumption}
Crucially, the environment distribution $P_\rho$ may differ from the distribution that generated the training data for the learned perception components, which means that we will eventually encounter OOD observations on which the learned components behave erratically. When the learned systems fail, only a restricted amount of information (e.g., an IMU measurement) remains accurate for control. Therefore, we do not make any assumptions on the quality of the learned system outputs, but we assume that the remaining information is accurate for all time.

\begin{assumption}[Fallback Measurement]\label{as:fallback-meas}
At each time step $t\geq 0$ we can access a \emph{fallback measurement} $\y_t \in \R^r$ such that
\begin{equation}
    \y_t = \g(\x_t), \quad \forall t \geq 0.
\end{equation}
We assume $\g$ is known and call $\calY := \{\y = \g(\x) \ : \ \x \in \calX\}$ the \emph{fallback output set}. We restrict ourselves to the setting where the system $\{\f, \g\}$ is not observable; perception is nominally necessary. 
\end{assumption}

Since the estimates $\hat{\x}_t$ may become corrupted unexpectedly, we need to monitor the system's performance online with an intent of detecting conditions that degrade its performance. To do so, we assume we can compute a heuristic OOD detector.
\begin{assumption}[OOD/Anomaly Detection]\label{as:ood}
    At each time step, an OOD/anomaly detection algorithm outputs a scalar anomaly signal $a_t = \mathrm{anomaly}(\z_{0:t}) \in \R$ as an indication of the quality of the state estimate $\hat{\x}_t$ (greater values indicate that the detector has lower confidence in the quality of $\hat{\x}_t$).
\end{assumption}
Note that we make no assumptions on the quality of the OOD detector in \cref{as:ood}: Our approach will certify the safety of the closed-loop system regardless of the correlations between $a_t$ and the perception error $\pe_t :=  \x_t - \hat{\x}_t$. However, the conservativeness of our algorithm will depend on the quality of the heuristic $a_t$. Furthermore, this framework readily allows us to incorporate algorithms that provably guarantee detection of perception errors by letting $a_t$ be an indicator on whether the perception system is reliable. This would only simplify the control design procedure we develop in \cref{sec:planning}-\cref{sec:conformal}. 

\section{Fallback-Safe MPC}\label{sec:planning}
We propose to control the system in state-feedback based on the estimates $\hat{\x}_t$ when the system operates in-distribution and use the anomaly signal $a_t$ to monitor when $\hat{\x}_t$ becomes unreliable. Then, if the monitor triggers, we transition to a fallback policy $\pi: \calY \to \calU$ that only relies on the remaining reliable information, i.e., the fallback measurement $\y_t$.

To avoid that a naively executed fallback creates additional hazards, we make two contributions in this section. First, we introduce the notions of \emph{recovery policies} and \emph{recovery sets}, safe subsets of the state space that we can make invariant without full state knowledge. 
Second, we develop a method to synthesize a fallback controller and modify the nominal operation of the robot to ensure the fallback strategy is feasible for all time.

\subsection{Recovery Policies and Recovery Sets}
\cref{def:safety} requires that a safe fallback satisfies state and input constraints for all time, despite the fact that certain elements of the state are no longer observable. Our insight is that while this is not achievable in general, we can often identify subsets of the state space in which the robot is always safe under some $\pi_R: \calY \to \calU$.  
\begin{definition}[Recovery Set, Policy]\label{def:recovery}
    A set $\calX_R\subseteq \calX$ is a \emph{safe recovery set} under a given recovery policy $\pi_R: \calY \to \calU$ if it is a robust positive invariant (RPI) set under the recovery policy. That is, if 
    \begin{equation}
        \f\big(\x, \pi_R(\g(\x)), \w\big) \in \calX_R \quad\forall \w \in \calW, \quad \forall \x \in \calX_R.
    \end{equation}
\end{definition}
For example, consider the quadrotor in \cref{fig:hero-fig}. The set of all states with altitude $z=0$ and velocity $v=0$ forms a recovery set under the recovery policy $\pi_R(\y_t) := 0$. 
\cref{def:recovery} differs from typical definitions in output-feedback problems, because output-feedback control designs typically focus on 1) maintaining the system output $\y_t$ of a partially observed system within a set of constraints despite the unobserved dynamics, or 2) respecting constraints on the true state by bounding the estimation error of an observer. In contrast, existence of a recovery policy allows us to persistently satisfy state constraints, even when estimation errors are unbounded on OOD inputs.



\subsection{Planning With Fallbacks}\label{sec:planning-mpc}
We now develop the Fallback-Safe MPC framework. First, to ensure that we satisfy safety constraints nominally, we need to define what it means for the perception system to be reliable in-distribution. 
Therefore, we choose a parametric compact state uncertainty set as a bound on the quality of the perception system in nominal conditions.

\begin{definition}[Reliability]
Let the \emph{perception error set} $\calE_{\theta} \subseteq \R^n$ be a symmetric compact set so that $0 \in \calE_{\theta}$. We say an estimate is \emph{reliable} when
    \begin{equation}\label{eq:error}
    \pe_t := \x_t - \hat{\x}_t \in \calE_\theta. 
\end{equation}
We say the perception system is \emph{unreliable}, or experiences a \emph{perception fault}, when $\pe_t \not \in \calE_\theta$.
\end{definition}
We explicitly use the subscript $\theta$ in the construction of $\calE_\theta$ to emphasize that choosing which estimates we consider reliable is a hyperparameter. In this work, we take the perception error set as $\calE_\theta = \{\pe \in \R^n \ : \|\mathbf{A} \pe \|_\infty \leq \alpha\}$ for some $\theta = (\mathbf{A}, \alpha)$ consisting of a matrix $\mathbf{A}$ and bound $\alpha \in \R$ (see \cref{sec:simulations}).
The hyperparameter $\theta$ introduces a trade-off between conservativeness in nominal operation and eagerness to trigger the fallback: Note that 
as $\calE_\theta$ increases in size, the more state uncertainty we must handle as part of the tolerable estimation errors in nominal conditions, increasing nominal conservatism. 
If we decrease the size of $\calE_\theta$, the more eager we will be to trigger the fallback. 

Then, in nominal conditions, we control the system using the state estimates $\hat{\x}_t$ with robust MPC, minimally modifying the nominal control objective so that there always exists a fallback strategy that reaches a \emph{recovery set} within $T +1 \in \mathbb{N}_{>0}$ time steps. To do so, we optimize two policy sequences: 1) a sequence of parametric fallback policies $\fu_{t:t+T|t}^F \subset \mathcal{P}_F \subset \{\pi : \R^r \to \R^m \}$, which may only rely on the fallback measurement $\y_t$ and 2) a nominal policy sequence $\fu_{t:t+T|t} \subset \calP_N \subseteq \{\pi: \calX \to \calU\}$ within a state-feedback policy class $\calP_N$, which we assume respects input constraints.
In addition, we assume that for any $\fu \in \calU$ and $\hat\x \in \calX$, there exists a $\pi\in \calP_N$ such that $\fu = \pi(\x)$. We can trivially satisfy this assumption by, e.g., optimizing over open-loop nominal input sequences. Note that the estimator dynamics satisfy 
\begin{equation}\label{eq:est-dyn}
    \hat{\x}_{t+1} = \f(\hat{\x}_t + \pe_t, \fu_t, \w_t) - \pe_{t+1}.
\end{equation}
We bound the evolution of the state estimates over time for a given fallback policy sequence as follows:

\begin{restatable}[Reachable Sets]{lemma}{reachsets}\label{lem:reachable}
    Assume we apply a fixed fallback policy sequence $\fu_{0:T}^F \subset \calP_F$ from timestep $t$ to $t+T$. Define the $k-$step reachable sets of the estimate $\hat{\x}_t$ recursively as $\widehat{\calR}_{0}(\hat{\x}_t, \fu_{0:T}^F):= \{\hat{\x}_t\}$ and 
    \begin{align*}
        &\widehat{\calR}_{k+1}(\hat{\x}_t, \fu_{0:T}^F) := \bigg\{\substack{\f(\hat{\x} + \pe, \fu_{k}^F(\g(\hat{\x} + \pe)), \w) \\ - \pe'} \ : \substack{\hat{\x} \in \widehat{\calR}_{k}(\hat{\x}_t, \fu_{0:T}^F),\\ \w \in \calW, \\ \pe, \pe' \in \calE_\theta }\bigg\}
    \end{align*}
    for $k\in \{0, \dots, T\}$. Furthermore, let 
    \begin{equation}
        \calR_{k}(\hat{\x}_t, \fu^F_{0:T}) := \widehat{\calR}_{k}(\hat{\x}_t, \fu^F_{0:T}) \oplus \calE_{\theta}
    \end{equation}
    be the $k-$step reachable set of the true state $\x_t$. If $\pe_{t:t+T+1} \subset \calE_\theta$, then it holds that $\hat{\x}_{t+k} \in \widehat{\calR}_{k}(\hat{\x}_{t}, \fu_{0:T}^F) \subseteq \calR_{k}(\hat{\x}_{t}, \fu_{0:T}^F)$ and $\x_{t+k} \in \calR_{k}(\hat{\x}_{t}, \fu_{0:T}^F)$ for all $k \in \{0, \dots, T+1\}$. Moreover, it holds that $\calR_{k}(\hat{\x}_{t+1}, \fu^F_{1:T}) \subseteq \calR_{k+1}(\hat{\x}_{t}, \fu^F_{0:T})$ for $k\in \{0, \dots, T \}$.
\end{restatable}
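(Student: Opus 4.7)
The plan is to establish all the inclusions by induction on $k$, using the estimator dynamics \eqref{eq:est-dyn} together with the recursive definitions of the reachable sets.

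For $\hat{\x}_{t+k} \in \widehat{\calR}_{k}(\hat{\x}_{t}, \fu^F_{0:T})$, I would induct on $k$. The base case $k=0$ is immediate since $\widehat{\calR}_{0}(\hat{\x}_t, \fu^F_{0:T}) = \{\hat{\x}_t\}$. For the inductive step, observe that in closed loop the fallback input at time $t+k$ is $\fu_{t+k} = \fu^F_{k}(\y_{t+k}) = \fu^F_{k}(\g(\hat{\x}_{t+k} + \pe_{t+k}))$, so \eqref{eq:est-dyn} yields
\begin{equation*}
\hat{\x}_{t+k+1} = \f\bigl(\hat{\x}_{t+k} + \pe_{t+k},\ \fu^F_{k}(\g(\hat{\x}_{t+k} + \pe_{t+k})),\ \w_{t+k}\bigr) - \pe_{t+k+1}.
\end{equation*}
The hypothesis $\pe_{t:t+T+1} \subset \calE_\theta$ gives $\pe_{t+k}, \pe_{t+k+1} \in \calE_\theta$, and $\w_{t+k}\in\calW$ by assumption. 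Combined with the inductive hypothesis $\hat{\x}_{t+k} \in \widehat{\calR}_k(\hat{\x}_t, \fu^F_{0:T})$, the right-hand side matches the form defining an element of $\widehat{\calR}_{k+1}(\hat{\x}_t, \fu^F_{0:T})$. The inclusion $\widehat{\calR}_{k}(\hat{\x}_t, \fu^F_{0:T}) \subseteq \calR_{k}(\hat{\x}_t, \fu^F_{0:T})$ is then immediate from $0 \in \calE_\theta$, and $\x_{t+k} = \hat{\x}_{t+k} + \pe_{t+k} \in \widehat{\calR}_k \oplus \calE_\theta = \calR_k$ completes the containment for the true state.

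For the consistency inclusion, I would prove the stronger auxiliary fact $\widehat{\calR}_k(\hat{\x}_{t+1}, \fu^F_{1:T}) \subseteq \widehat{\calR}_{k+1}(\hat{\x}_t, \fu^F_{0:T})$ by induction on $k$, and then take the Minkowski sum with $\calE_\theta$ on both sides to recover the desired claim. The base case $k=0$ reduces to $\hat{\x}_{t+1} \in \widehat{\calR}_1(\hat{\x}_t, \fu^F_{0:T})$, which is the first inclusion at $k=1$. For the inductive step, every element of $\widehat{\calR}_{k+1}(\hat{\x}_{t+1}, \fu^F_{1:T})$ is obtained by propagating some $\hat{\x}' \in \widehat{\calR}_{k}(\hat{\x}_{t+1}, \fu^F_{1:T})$ via policy $\fu^F_{k+1}$ with admissible $\pe, \pe' \in \calE_\theta$ and $\w \in \calW$; by the inductive hypothesis $\hat{\x}'$ lies in $\widehat{\calR}_{k+1}(\hat{\x}_t, \fu^F_{0:T})$, so the recursive definition of $\widehat{\calR}_{k+2}(\hat{\x}_t, \fu^F_{0:T})$ places the propagated point there as well.

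The main subtlety to handle carefully is that the fallback policy is evaluated at the true fallback measurement $\y_{t+k} = \g(\x_{t+k}) = \g(\hat{\x}_{t+k} + \pe_{t+k})$ rather than at $\hat{\x}_{t+k}$ alone, which is precisely why the quantifier $\pe \in \calE_\theta$ appears inside the argument of $\g$ in the recursive definition of $\widehat{\calR}_{k+1}$. Once this dependence is aligned with \eqref{eq:est-dyn}, the remainder is routine bookkeeping over the two nested recursions.
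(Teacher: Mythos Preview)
Your proposal is correct and follows essentially the same approach as the paper's proof: both argue by induction on $k$ using the estimator dynamics \eqref{eq:est-dyn} to establish $\hat{\x}_{t+k}\in\widehat{\calR}_k$, then prove the auxiliary inclusion $\widehat{\calR}_k(\hat{\x}_{t+1},\fu^F_{1:T})\subseteq\widehat{\calR}_{k+1}(\hat{\x}_t,\fu^F_{0:T})$ by a second induction, and finally pass to $\calR_k$ via the Minkowski sum with $\calE_\theta$ and the fact that $0\in\calE_\theta$. Your remark about the policy being evaluated at $\g(\hat{\x}_{t+k}+\pe_{t+k})$ is the only point requiring care, and the paper handles it identically.
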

\begin{proof}
    See \cref{ap:reachable} for the proof.
\end{proof}

To maintain feasibility of the fallback policy despite estimation errors and disturbances, we solve the following finite time robust optimal control problem online:


\begin{equation}\label{eq:modification-mpc}
\begin{aligned}
    \minimize_{\substack{\fu^F_{t:t+T|t}\subset \calP_F, \\ \fu_{t:t+T|t} \subset \calP_N}} 
    & \enspace C(\hat{\x}_{t}, \fu_{t:t+T|t}, \fu^F_{t:t+T|t})\\
    \subjectto\enspace & \calR_{k}(\hat{\x}_t, \fu_{t:t+T|t}^F) \subseteq \calX  \ \forall k \in \{0, \dots, T\}, \\
    &\fu^F_{t+k|t}(\g(\calR_{k}(\hat{\x}_t, \fu_{t:t+T|t}^F))) \subset \calU \ \forall k \in \{0, \dots, T\}, \\
    & \calR_{T+1}(\hat{\x}_t, \fu^F_{t:t+T|t}) \subseteq \calX_R, \\
    & \fu_{t|t}(\hat{\x}_t) = \fu^F_{t|t}(\y_t).
\end{aligned}
\end{equation}
The MPC problem \cref{eq:modification-mpc} robustly optimizes the trajectory of the robot along a $T+1$ step prediction horizon and maintains both a nominal policy sequence $\fu_{t:t+T|t}$, and a fallback tube $\calR_{0:{t+T+1}}(\hat{\x}_t, \fu_{t:t+T|t}^F)$. Let $\{\fu_{t+k|t}^\star, \fu^{F, \star}_{t+k|t}\}_{k=0}^T$ be an optimal collection of policy sequences for \eqref{eq:modification-mpc} at time step $t$. 
Executing the fallback policy sequence $\fu_{t:t+T|t}^{F,\star}$ guarantees that we reach a given recovery set $\calX_R$ within $T+1$ time steps for any disturbance sequence $\w_{t:t+T|t}$ and perception errors $\pe_{t:t+T+1|t} \subset \calE_\theta$. 
Because we ensure that the first inputs of both the nominal and the fallback policies are identical, i.e., that $\fu_{t|t}^F(\y_t) = \fu_{t|t}(\hat{\x}_t)$, we can guarantee that we can recover the system to $\calX_R$ if we detect a fault at $t+1$ by applying the current fallback policy sequence $\fu_{t:t+T|t}^{F,\star}$. 

We assume the recovery set is invariant with respect to the estimator dynamics \cref{eq:est-dyn} in nominal conditions.
\begin{assumption}\label{as:recovery}
    We are given a \emph{recovery policy} $\pi_R: \calY \to \calU$ associated with a nonempty recovery set $\calX_R$ under the estimate dynamics \cref{eq:est-dyn} in nominal conditions, so that $\calR_1(\hat{\x}, \pi_R(\g(\cdot))) \subseteq \calX_R$ for all $\hat{\x}\in\calX_R$. 
\end{assumption}
\cref{as:recovery} follows the classic assumption in the robust MPC literature---access to a terminal controller associated with a nonempty RPI set---that enables guarantees on persistent feasibility and constraint satisfaction \cite{BorrelliBemporadEtAl2017, MayneRakovic2006, GoulartKerrigan2006, LorenzettiPavone2020}, except that we explicitly enforce that the recovery policy only uses fallback measurements $\y_t$. We note that \cref{as:recovery} can be satisfied by designing a recovery policy (e.g., LQR or human-insight as in the drone landing example) and verifying whether a chosen set satisfies \cref{def:recovery} offline. Alternatively, we can compute $\calX_R$ using existing algorithms for robust invariant set computation (e.g., see \cite{BorrelliBemporadEtAl2017}), since under $\pi_R$ and the assumption that $\pe\in\calE_\theta$, \cref{eq:est-dyn} is an autonomous system subject to bounded disturbances. 

We choose the objective $C$ in problem \cref{eq:modification-mpc} to minimally interfere with the nominal operation of the robot by optimizing a disturbance free nominal trajectory as is common in the MPC literature (e.g., see \cite{BorrelliBemporadEtAl2017, GoulartKerrigan2006, MayneRakovic2006,BrunkeGreefEtAl2022}). An alternative is to minimally modify the outputs of another controller \cite{BrunkeZhou2022, ChengOrosz2019}. 
We develop our framework in generic terms in this section, so we provide a tractable reformulation of \cref{eq:modification-mpc} for linear-quadratic systems with a fixed feedback gain based on classic tube MPC algorithms \cite{MayneRakovic2006} in 
\cite{SinhaSchmerlingEtAl2023}.
For nonlinear systems, it is common to approximate a solution via sampling (e.g., \cite{DyroHarrisonEtAl2021, LewJansonEtAl2022}), so we also provide an approximate formulation using the PMPC algorithm \cite{DyroHarrisonEtAl2021} in 
\cite{SinhaSchmerlingEtAl2023},
which combines uncertainty sampling with sequential convex programming.

 Here we assume access to a runtime monitor $w$ to decide when we trigger the fallback; we construct a monitor with provable guarantees in \cref{sec:conformal}.
\begin{definition}[Runtime Monitor]
    A runtime monitor $w: \R \to \{0,1\}$ is a function of the anomaly signal, where $w(a_t) = 1$ implies the monitor raises an alarm.
\end{definition}
We solve \cref{eq:modification-mpc} online at each time step $t$ and apply the first optimal control input in a receding horizon fashion. If the runtime monitor triggers, indicating a detection of a perception fault (i.e., $\x_t - \hat{\x}_t \not\in \calE_t$), we apply the previously computed fallback policy sequence until we reach the recovery set. Then, we revert to the known recovery policy. We summarize this procedure in Algorithm \ref{alg:fallback-safe-mpc}. 

Let $t_{\mathrm{fail}}$ be the timestep at which the runtime monitor $w$ triggers the fallback. As long as the runtime monitor does not miss a detection of a perception fault, i.e., that $\x_t - \hat{\x}_t \in \calE_\theta$ for all $0 \leq t < t_{\mathrm{fail}}$, the MPC in \cref{eq:modification-mpc} is recursively feasible. As a result, Algorithm \ref{alg:fallback-safe-mpc} persistently satisfies state and input constraints in the presence of disturbances and estimation errors.


\begin{algorithm}[t]\label{alg:fallback-safe-mpc}
\caption{Fallback-Safe MPC}
  \SetAlgoLined
  \KwIn{Initial state estimate $\hat\x_0$ such that \cref{eq:modification-mpc} is feasible, 
  runtime monitor $w:\R \to \{0,1\}$.}
  $t_{\mathrm{fail}} \gets \infty$

  \For{ $t \in \{0, 1, \dots t_{\mathrm{lim}}\}$}{
    Observe $\hat{\x}_t, \ \y_t,\ a_t$
    
    \If{$w(a_t) = 1$}{ 
    $t_{\mathrm{fail}} \gets \min\{t_{\mathrm{fail}}, \ t\}$}
    Apply control input 
    \begin{equation*}
        \fu_t =\begin{cases}
    \fu^\star_{t|t}(\hat{\x}_t) & \mathrm{if} \quad t_{\mathrm{fail}} > t  \\
    \fu^{F,\star}_{t|t_{\mathrm{fail}} - 1}(\y_t) & \mathrm{if} \quad t_{\mathrm{fail}} \leq t < t_{\mathrm{fail}} + T \\
    \pi_R(\y_t) & \mathrm{if} \quad t \geq t_{\mathrm{fail}} + T
    \end{cases} 
    \end{equation*}
    }
\end{algorithm}
\begin{restatable}[Fallback Safety]{theorem}{fallbacksafe}\label{thm:mpc-safety}
Consider the closed-loop system formed by the dynamics \cref{eq:dynamics} and the Fallback-Safe MPC (Algorithm \ref{alg:fallback-safe-mpc}). Suppose that $\pi_R \in \calP_F$, and 
that $\x_t - \hat{\x}_t \in \calE_\theta$ for all $0 \leq t < t_{\mathrm{fail}}$. Then, if the Fallback-Safe MPC problem \cref{eq:modification-mpc} is feasible at $t = 0$ and $w(a_0) = 0$, we have that 1) the MPC problem \cref{eq:modification-mpc} is feasible for all $t < t_{\mathrm{fail}}$ and that 2) the closed-loop system  satisfies $\x_t \in \calX$, $\fu_t \in \calU$ for all $t \geq0$.
\end{restatable}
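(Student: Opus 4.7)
\noindent\textit{Proof plan.} I would decompose the claim into (i) recursive feasibility of \cref{eq:modification-mpc} for all $0 \leq t < t_{\mathrm{fail}}$, and (ii) pointwise constraint satisfaction split into three phases: the nominal phase ($t < t_{\mathrm{fail}}$), the fallback-execution phase ($t_{\mathrm{fail}} \leq t < t_{\mathrm{fail}} + T$), and the post-recovery phase ($t \geq t_{\mathrm{fail}} + T$). The consistency constraint $\fu_{t|t}(\hat{\x}_t) = \fu^F_{t|t}(\y_t)$ in \cref{eq:modification-mpc} is what lets me splice the nominal and fallback trajectories at $t_{\mathrm{fail}} - 1$.

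For (i), I would induct on $t$. The base case $t = 0$ is given by hypothesis. For the inductive step, I shift the optimal fallback sequence forward by one and append $\pi_R$ (legal since $\pi_R \in \calP_F$ by hypothesis), i.e.\ define $\tilde{\fu}^F_{t+1+k \mid t+1} := \fu^{F,\star}_{t+1+k \mid t}$ for $k \in \{0,\dots,T-1\}$ and $\tilde{\fu}^F_{t+T+1\mid t+1} := \pi_R$. State and input feasibility for $k \leq T$ follow directly from the ``moreover'' inclusion $\calR_k(\hat{\x}_{t+1},\tilde{\fu}^F) \subseteq \calR_{k+1}(\hat{\x}_t,\fu^{F,\star})$ in \cref{lem:reachable} combined with feasibility at $t$. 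For the terminal constraint at $k = T+1$, I use the $k = T$ case of the same inclusion to conclude $\calR_T(\hat{\x}_{t+1},\tilde{\fu}^F) \subseteq \calR_{T+1}(\hat{\x}_t,\fu^{F,\star}) \subseteq \calX_R$, and then \cref{as:recovery} propagates this one further step under $\pi_R$ to give $\calR_{T+1}(\hat{\x}_{t+1},\tilde{\fu}^F) \subseteq \calX_R$. A nominal policy $\tilde{\fu}_{t+1|t+1} \in \calP_N$ witnessing the consistency constraint at $t+1$ exists by the richness assumption on $\calP_N$ stated below \cref{eq:modification-mpc}.

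For (ii), in the nominal phase the consistency constraint gives $\fu_t = \fu^{F,\star}_{t|t}(\y_t) \in \calU$, while $\pe_t \in \calE_\theta$ places $\x_t$ in $\calR_0 \subseteq \calX$. In the fallback-execution phase, the input actually applied at $t_{\mathrm{fail}} - 1$ already matches the first entry of $\fu^{F,\star}_{\cdot\mid t_{\mathrm{fail}}-1}$ (once again by consistency), so the realised trajectory from $t_{\mathrm{fail}}-1$ onwards is the planned fallback rollout. \cref{lem:reachable} then yields $\x_{t_{\mathrm{fail}}-1+k} \in \calR_k \subseteq \calX$ for $k \leq T$ and $\x_{t_{\mathrm{fail}}+T} \in \calR_{T+1} \subseteq \calX_R$, with inputs in $\calU$ by the fallback input constraint of \cref{eq:modification-mpc}. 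In the post-recovery phase, \cref{def:recovery} propagates $\x_t \in \calX_R \subseteq \calX$, while $\fu_t = \pi_R(\y_t) \in \calU$ for all subsequent time by the codomain of $\pi_R$.

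The main obstacle is the mismatch between the hypothesis of \cref{lem:reachable} ($\pe_{t:t+T+1}\subset\calE_\theta$) and what the theorem grants us ($\pe_t\in\calE_\theta$ only for $t < t_{\mathrm{fail}}$), so post-trigger perception errors are unconstrained. I would resolve this by observing that every policy in $\calP_F$ reads only $\y_t = \g(\x_t)$, making the closed-loop true-state dynamics during the fallback-execution phase independent of $\hat{\x}$ and hence insensitive to the perception errors after $t_{\mathrm{fail}}-1$; only the initial uncertainty $\x_{t_{\mathrm{fail}}-1} \in \calR_0$, which uses solely $\pe_{t_{\mathrm{fail}}-1}\in\calE_\theta$, matters for the tube inclusion. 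In writing up the proof I would either strengthen \cref{lem:reachable} to record this weaker hypothesis for purely $\y$-dependent policies or simply re-derive the inclusion $\x_{t+k}\in\calR_k$ from scratch under it.
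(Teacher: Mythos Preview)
Your proposal is correct and follows essentially the same route as the paper: the paper first isolates recursive feasibility (your part (i)) as a separate lemma with the identical shifted-and-append-$\pi_R$ candidate, then splits the constraint-satisfaction argument into the same three phases as your part (ii), invoking \cref{lem:reachable} for the fallback-execution phase and \cref{as:recovery} for the post-recovery phase.

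You are in fact more careful than the paper on the point you flag as the ``main obstacle.'' The paper's proof simply cites \cref{lem:reachable} to obtain $\x_{t_{\mathrm{fail}}-1+k}\in\calR_k$ during fallback execution, without addressing that the lemma's hypothesis $\pe_{t:t+T+1}\subset\calE_\theta$ is unavailable for $t\geq t_{\mathrm{fail}}$; it explicitly invokes $0\in\calE_\theta$ only for the post-recovery phase. Your proposed resolution---that $\calP_F$-policies depend only on $\y_t=\g(\x_t)$, so the true-state rollout is insensitive to $\pe$ after $t_{\mathrm{fail}}-1$, and one may propagate $\x_{t+k}\in\widehat\calR_k$ by setting the free $\pe,\pe'$ to $0\in\calE_\theta$ at each step---is precisely the argument needed to make that citation rigorous.
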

\begin{proof}
See \cref{ap:mpc-safety} for the proof.
\end{proof}

We emphasize that \cref{thm:mpc-safety} simply recovers a standard recursive feasibility argument for the MPC scheme in the specific case in which a perception failure never occurs.  
\section{CALIBRATING OOD DETECTORS WITH CONFORMAL INFERENCE}\label{sec:conformal}

In the previous section, we developed the Fallback-Safe MPC framework, which guarantees safety under the condition that the perception is reliable at all time steps before we trigger the fallback.  We could trivially ensure this is the case by setting $w(a) = 1 \ \forall a \in \R \setminus a_0$, so that the fallback always triggers, no matter the quality of the perception. However, a trivial runtime monitor will unnecessarily disrupt nominal operations, so it is not useful. Therefore, in this section, we aim to construct a runtime monitor $w:\R \to \{0,1\}$ that provably triggers with high probability when a perception fault occurs, but does not raise too many false alarms in practice. To do so, we adapt the conformal prediction algorithm in \cite{LuoZhao2023}, which can only certify a prediction on a single test point, to retain a safety assurance when we sequentially query the runtime monitor online with the anomaly scores $a_0, a_1, \dots$ generated during a test trajectory. Our procedure requires some ground truth data to calibrate the runtime monitor. As a shorthand, we use the notation $\tau$ to denote a trajectory with ground truth information, $\tau := (\{\x_i, \fu_i, \hat{\x}_i, a_i\})_{i=0}^{t_{\mathrm{lim}}} \in \calT$.


\begin{assumption}[Calibration Data]
    We have access to a trajectory dataset $\calD = \{\tau_i\}_{i=0}^N \iid P$ sampled independently and identically distributed (iid) from a trajectory distribution $P$.
\end{assumption}

The trajectory distribution $P$ is a result of 1) the environment distribution $P_\rho$ and 2) the controller we use to collect data. Therefore, in general, when we deploy the Fallback-Safe MPC policy (Alg. \ref{alg:fallback-safe-mpc}), the resulting trajectory distribution $P'$ may differ from $P$. The results we present in this section capture both the scenario where the environment distribution changes between data collection and deployment or where the data collection policy differs from the Fallback-Safe MPC.



\cref{thm:mpc-safety} requires that we trigger the fallback at any time step before or when a perception fault occurs. Therefore, we must compare the step that the runtime monitor triggers an alarm, $t_{\mathrm{fail}}$, with the first step that a perception fault occurs.
\begin{definition}[Stopping Time]
    Let $t_{\mathrm{stop}} : \calT \to \mathbb{N}_{\geq 0}$ be the stopping time of a trajectory $\tau$, defined as
    \begin{equation}
        t_{\mathrm{stop}}(\tau) := \inf_{t\geq0} \{t \ : \ (\x_t -\hat{\x}_t \not \in \calE_\theta) \ \vee \ (t = t_{\mathrm{lim}})\},
    \end{equation}
    where $t_{\mathrm{lim}}$ is the episode time limit in \cref{def:safety}. 
\end{definition}
To guarantee safety as in \cref{def:safety}, our insight is that it is sufficient to guarantee that our runtime monitor raises an alarm at the \emph{first} time step for which $\pe\not \in \calE_{\theta}$ with high probability. Therefore, we can circumvent the need for a complex sequential analysis that accounts for the correlations between the runtime monitor's hypothesis tests over time. Instead, we can directly apply methods developed for i.i.d. samples to the dataset of stopping time observables $\calD_{\mathrm{stop}}:= \{(\x^i_{t_{\mathrm{stop}}(\tau_i)}, \hat\x^i_{t_{\mathrm{stop}}(\tau_i)}, a^i_{t_{\mathrm{stop}}(\tau_i)})\}_{i=0}^N$, since the stopping time observables are i.i.d. because $\calD$ is i.i.d. We outline this approach in Algorithm \ref{alg:conformal}, which adapts the conformal prediction algorithm in \cite{LuoZhao2023} to our sequential setting.

\begin{algorithm}[t]\label{alg:conformal}
\caption{Modification of \cite{LuoZhao2023} for Conformal Calibration of Runtime Monitor}
  \SetAlgoLined
  \KwIn{Dataset $\calD = \{\tau_i\}_{i=0}^N \iid P$, perception system, OOD detector, state uncertainty tolerance $\calE_\theta$, risk tolerance $\delta \in (0,1]$, new test anomaly score $a_{\mathrm{test}} \in \R$.}
  \KwOut{0 or 1}
  Compute the dataset of stopping states, estimates, and anomaly scores as
\begin{equation*}
    \calD_{\mathrm{stop}}:= \{(\x^i_{t_{\mathrm{stop}}(\tau_i)}, \ \hat\x^i_{t_{\mathrm{stop}}(\tau_i)}, \ a^i_{t_{\mathrm{stop}}(\tau_i)})\}_{i=0}^N.
\end{equation*}

  Compute the set 
\begin{equation*}
    \calA := \{a \ : \ \x - \hat{\x} \not \in \calE_{\theta}, \ (\x, \hat{\x}, a) \in \calD_{\mathrm{stop}}\}.
\end{equation*}

  Sample $U$ uniformly from
\begin{equation*}
      U \sim \{0,1,\dots, |\{a \in \calA: a = a_{\mathrm{test}}\}|\}
\end{equation*}

Compute 
\begin{equation*}
    q = \frac{|\{a \in \calA : a > a_{\mathrm{test}}\}| + U + 1}{|\calA| + 1}
\end{equation*}

  \textbf{if} $q \leq 1 - \delta$ \textbf{then return} 1 \textbf{else return} $0$
\end{algorithm}




\begin{restatable}[Conformal Calibration]{lemma}{conformal}\label{lem:conformal}
    Set a risk tolerance $\delta \in (0,1]$, and sample a deployment trajectory $\tau\sim P'$ by executing the Fallback-Safe MPC (Algorithm \ref{alg:fallback-safe-mpc}) using Algorithm \ref{alg:conformal} as runtime monitor $w$.
    Then, the false negative rate of Algorithm~\ref{alg:conformal} is bounded as 
    \begin{equation}\label{eq:coverage}
    \begin{aligned}
        &\prob(\mathrm{False \ Negative}) := \\
        &\prob\big(
        w(a_t) = 0 \ \forall t \in [0, t_{\mathrm{stop}}(\tau)]
        \ | \ \pe_{t_{\mathrm{stop}}(\tau)} \not \in \calE_{\theta} \big) \leq \\
        &\quad \delta + \frac{1}{|\calA| + 1} + \mathrm{TV}(P_{t_{\mathrm{stop}}|\mathrm{fault}}, P'_{t_{\mathrm{stop}}|\mathrm{fault}}),
    \end{aligned} 
    \end{equation}
    where $P_{t_{\mathrm{stop}}|\mathrm{fault}}$ is the distribution of $(\x, \hat{\x}, a)$ at $t_{\mathrm{stop}}$ conditioned on the event that $\pe_{t_{\mathrm{stop}}} \not \in \calE_\theta$ under a trajectory sampled from $P$, and $P_{t_{\mathrm{stop}}|\mathrm{fault}}'$ is the distribution of $(\x, \hat{\x}, a)$ at $t_{\mathrm{stop}}$ under a trajectory sampled from $P'$,  conditioned on both $\pe_{t_{\mathrm{stop}}} \not \in \calE_\theta$ and  $t_{\mathrm{fail}}\geq t_{\mathrm{stop}}$. Here, $TV(\cdot, \cdot)$ denotes the total variation distance.   
\end{restatable}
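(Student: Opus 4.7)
The plan is to bound the false negative probability in three stages. First, reduce the sequential event to a single-point hypothesis test at $t_{\mathrm{stop}}$. Second, bridge the mismatch between the calibration distribution $P$ and the deployment distribution $P'$ via total variation. Third, invoke exchangeability of the stopping-time observables under $P$ to recover a standard split-conformal bound.

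For the reduction, observe that the false negative event $\{w(a_t)=0 \ \forall t \in [0, t_{\mathrm{stop}}]\}$ coincides with $\{t_{\mathrm{fail}} > t_{\mathrm{stop}}\}$, and therefore implies both $w(a_{t_{\mathrm{stop}}}) = 0$ and $t_{\mathrm{fail}} \geq t_{\mathrm{stop}}$. The chain rule of conditional probability then upper bounds the false negative rate under $P'$ by $\prob_{P'}(w(a_{t_{\mathrm{stop}}}) = 0 \mid \pe_{t_{\mathrm{stop}}} \notin \calE_\theta, \ t_{\mathrm{fail}} \geq t_{\mathrm{stop}})$, which by the definition of $P'_{t_{\mathrm{stop}}|\mathrm{fault}}$ is precisely the expectation of the monitor's acceptance indicator under $P'_{t_{\mathrm{stop}}|\mathrm{fault}}$.

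Next, after marginalizing the randomness in the calibration set $\calD$ and the uniform variable $U$ inside \cref{alg:conformal}, the monitor's acceptance probability is a measurable $[0,1]$-valued function of the test anomaly score $a$ alone. The standard coupling characterization of total variation then gives $\E_{P'_{t_{\mathrm{stop}}|\mathrm{fault}}}[\cdot] \leq \E_{P_{t_{\mathrm{stop}}|\mathrm{fault}}}[\cdot] + \mathrm{TV}(P_{t_{\mathrm{stop}}|\mathrm{fault}}, P'_{t_{\mathrm{stop}}|\mathrm{fault}})$, producing the TV term in the claimed bound.

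Finally, under $P$ the stopping observable conditioned on a fault is drawn from $P_{t_{\mathrm{stop}}|\mathrm{fault}}$, and by construction the elements of $\calA$ are i.i.d. draws from the same distribution. Conditional on $|\calA|$, the augmented collection $\{a_{t_{\mathrm{stop}}}\} \cup \calA$ is therefore exchangeable, so the randomization by $U$ makes the conformal p-value $q$ uniform on the discrete grid $\{k/(|\calA|+1) : k = 1, \ldots, |\calA|+1\}$, mirroring the argument in \cite{LuoZhao2023}. Bounding the discretization loss at threshold $1-\delta$ then yields $\prob_P(q > 1-\delta \mid \mathrm{fault}) \leq \delta + 1/(|\calA|+1)$. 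The main obstacle will be carefully formalizing the exchangeability step given that $\calA$ is itself a random subset of $\calD_{\mathrm{stop}}$ whose cardinality depends on how many of the $N+1$ trajectories exhibit a fault at their stopping time (with the degenerate case $|\calA|=0$ handled by noting that the bound is trivial whenever $\delta + 1/(|\calA|+1) \geq 1$); combining the three steps above then yields the stated inequality.
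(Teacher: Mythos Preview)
Your proposal is correct and follows essentially the same three-ingredient argument as the paper: the reduction to a single hypothesis test at $t_{\mathrm{stop}}$ via the chain rule, the TV bound to swap the test-point distribution from $P'_{t_{\mathrm{stop}}|\mathrm{fault}}$ to $P_{t_{\mathrm{stop}}|\mathrm{fault}}$, and the exchangeability-based conformal coverage bound. The only cosmetic differences are that you apply the TV step before the conformal step (the paper does them in the opposite order) and that you spell out the discrete-uniform p-value argument where the paper simply cites \cite[Proposition~1]{LuoZhao2023}; your remark about conditioning on $|\calA|$ to restore exchangeability is exactly the care that citation hides.
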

\begin{proof}
    See \cref{ap:conformal} for the proof. 
\end{proof}
\cref{lem:conformal} shows that Algorithm \ref{alg:conformal} will issue a timely warning with probability at least $\delta + 1 / (|\calA| + 1)$ without relying on properties of $P$ (i.e., our guarantee is distribution-free), but that this guarantee degrades when a distribution shift occurs between the calibration runs in $\calD$ and the test trial. 
Next, we leverage \cref{lem:conformal} to analyze the end-to-end safety of the system.


\begin{restatable}[End-to-end Guarantee]{theorem}{endtoend}\label{thm:endtoend}
    Consider the closed-loop system formed by the dynamics \cref{eq:dynamics} and the Fallback Safe MPC (Algorithm \ref{alg:fallback-safe-mpc}), using Algorithm \ref{alg:conformal} as the runtime monitor $w$. Then, if the MPC problem \cref{eq:modification-mpc} is feasible at $t=0$ and $w(a_0) = 0$, it holds that
    \begin{align*}
        &\prob(\x_t \in \calX, \ \fu_t \in \calU \ \forall t \in [0, t_{\mathrm{lim}}]) \geq \\
        &1 - \delta - \frac{1}{|\calA| + 1} - \mathrm{TV}(P_{t_{\mathrm{stop}}|\mathrm{fault}}, P_{t_{\mathrm{stop}}|\mathrm{fault}}').
    \end{align*}
\end{restatable}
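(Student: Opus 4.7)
The plan is to reduce the end-to-end safety event to the event analyzed by Lemma (Conformal Calibration), using Theorem (Fallback Safety) to eliminate the ``easy'' cases where no perception fault ever occurs in the window the monitor is responsible for. Concretely, I would argue by contrapositive: any constraint violation along $t \in [0, t_{\mathrm{lim}}]$ must be caused by a perception fault that the runtime monitor failed to catch in time, and then bound the probability of that monitor failure directly with Lemma 2.

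First, I would define the two relevant events: the ``fault'' event $F = \{\pe_{t_{\mathrm{stop}}(\tau)} \not\in \calE_\theta\}$, which by definition of $t_{\mathrm{stop}}$ coincides with the event that a perception fault ever occurs on $[0, t_{\mathrm{lim}}]$, and the ``missed detection'' event $M = \{w(a_t) = 0 \text{ for all } t \in [0, t_{\mathrm{stop}}(\tau)]\}$, which is exactly the event $\{t_{\mathrm{fail}} > t_{\mathrm{stop}}(\tau)\}$. I would then show the key inclusion
\begin{equation*}
\{\exists\, t \in [0, t_{\mathrm{lim}}] : \x_t \notin \calX \ \vee\ \fu_t \notin \calU\} \subseteq F \cap M.
\end{equation*}
This is the crux: by Theorem (Fallback Safety), feasibility at $t=0$ together with $\pe_t \in \calE_\theta$ for every $t < t_{\mathrm{fail}}$ guarantees $\x_t \in \calX$ and $\fu_t \in \calU$ for all $t \geq 0$. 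Hence constraint violation requires that some perception fault occur strictly before $t_{\mathrm{fail}}$, which by the minimality of $t_{\mathrm{stop}}$ means $t_{\mathrm{stop}}(\tau) < t_{\mathrm{fail}}$ and $t_{\mathrm{stop}}(\tau) < t_{\mathrm{lim}}$; the former is $M$ and the latter is $F$.

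From the inclusion I would bound
\begin{equation*}
\prob(\text{violation}) \leq \prob(F \cap M) = \prob(M \mid F)\,\prob(F) \leq \prob(M \mid F),
\end{equation*}
and then invoke Lemma (Conformal Calibration), which is exactly the statement that $\prob(M \mid F) \leq \delta + \tfrac{1}{|\calA|+1} + \mathrm{TV}(P_{t_{\mathrm{stop}}|\mathrm{fault}}, P'_{t_{\mathrm{stop}}|\mathrm{fault}})$. Complementing yields the stated lower bound on $\prob(\x_t \in \calX, \fu_t \in \calU \ \forall t \in [0, t_{\mathrm{lim}}])$. The assumption $w(a_0) = 0$ in the theorem statement (together with feasibility of the MPC at $t = 0$) is what permits the invocation of Theorem (Fallback Safety) in the base case, since it rules out the trivial ``monitor triggered before the dynamics began'' branch of Algorithm 1.

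The main obstacle is the inclusion step: one has to be careful that the conformal lemma conditions on a fault occurring at the \emph{first} bad time step $t_{\mathrm{stop}}$ and on the monitor being silent up to and including $t_{\mathrm{stop}}$, which matches precisely the condition $t_{\mathrm{fail}} > t_{\mathrm{stop}}$ needed to violate the hypothesis of Theorem (Fallback Safety). A sloppier formulation (e.g., conditioning on any fault rather than the first one) would not directly compose, so the proof essentially consists of verifying that the ``$t_{\mathrm{stop}}$'' formalism in Section~\ref{sec:conformal} was designed exactly so that this composition is immediate.
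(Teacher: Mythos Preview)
Your proposal is correct and is essentially the same argument as the paper's: both reduce the safety event to the complement of $F\cap M$ via \cref{thm:mpc-safety}, bound $\prob(F\cap M)\le\prob(M\mid F)$, and then invoke \cref{lem:conformal}. The paper phrases the first step as $\prob(\mathrm{SAFE})\ge \prob(\mathrm{SAFE}\mid \pe_t\in\calE_\theta\ \forall t<t_{\mathrm{fail}})\cdot\prob(\pe_t\in\calE_\theta\ \forall t<t_{\mathrm{fail}})$ and identifies the complement of the conditioning event with $F\cap M$, which is just the contrapositive of your inclusion $\{\text{violation}\}\subseteq F\cap M$.
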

\begin{proof}
See \cref{ap:endtoend} for the proof.
\end{proof}

\cref{thm:endtoend} gives a general end-to-end guarantee on the safety of the Fallback-Safe MPC framework when we use Algorithm \ref{alg:conformal} as a runtime monitor. It is not possible to tightly bound the $\mathrm{TV}$ distance term in \cref{eq:coverage} without further assumptions.
However, if  1) we use the Fallback-Safe MPC for data collection and 2) the environment distribution is fixed between data collection and deployment, then we can certify that we satisfy \cref{def:safety}:



\begin{corollary}\label{cor:endtoend}
Suppose the environment distribution $P_\rho$ is fixed between collecting $\calD$ and the test trajectory, and that we collect the dataset $\calD$ by running the Fallback-Safe MPC and a runtime monitor using privileged information, that is, $w(\cdot) := 1 - \mathbf{1}\{\pe_t \in \calE_\theta\}$. Then, it holds that $P_{t_{\mathrm{stop}}|\mathrm{fault}} = P'_{t_{\mathrm{stop}}|\mathrm{fault}}$. Therefore, 1) we satisfy state and input constraints during data collection with probability $1$, and 2) we satisfy \cref{def:safety} with probability at least $1 - \delta - \frac{1}{|\calA| +1}$ during a test trajectory. 
\end{corollary}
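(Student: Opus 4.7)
The plan is to address the three claims of the corollary in turn: (i) that the stopping-time observable distributions agree, (ii) almost-sure constraint satisfaction during data collection, and (iii) the end-to-end safety bound during deployment.

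First I would dispose of claim (ii) directly using \cref{thm:mpc-safety}. Under the privileged monitor $w(a_t) = 1 - \mathbf{1}\{\pe_t \in \calE_\theta\}$, the trigger time $t_{\mathrm{fail}}$ is by construction the first step at which a perception fault occurs (or $\infty$ if no fault ever happens). Thus $\pe_t \in \calE_\theta$ for all $t < t_{\mathrm{fail}}$ with probability one, satisfying the hypothesis of \cref{thm:mpc-safety} pathwise. Applying that theorem yields $\x_t \in \calX$, $\fu_t \in \calU$ for all $t \geq 0$ almost surely along any data collection trajectory.

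Next I would establish the key identity $P_{t_{\mathrm{stop}}|\mathrm{fault}} = P'_{t_{\mathrm{stop}}|\mathrm{fault}}$. Both $P$ and $P'$ are generated by composing the same initial state distribution, the same environment distribution $P_\rho$ (by assumption unchanged), the same dynamics $\f$, observation process $p_\rho$, perception system, and anomaly detector; they differ only in the runtime monitor plugged into Algorithm~\ref{alg:fallback-safe-mpc}. Consider a trajectory conditioned on $\pe_{t_{\mathrm{stop}}} \not\in \calE_\theta$. By the definition of $t_{\mathrm{stop}}$, we have $\pe_s \in \calE_\theta$ for every $s < t_{\mathrm{stop}}$. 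Under $P$, the privileged monitor therefore returns $0$ at every step $s < t_{\mathrm{stop}}$, so Algorithm~\ref{alg:fallback-safe-mpc} applies the nominal input $\fu^\star_{s|s}(\hat\x_s)$ on that interval. Under $P'$, the additional conditioning on $t_{\mathrm{fail}} \geq t_{\mathrm{stop}}$ guarantees that the conformal monitor likewise returns $0$ throughout $[0, t_{\mathrm{stop}})$, and the nominal input is again applied on the same interval. Hence the conditional law of the whole prefix $(\x_{0:t_{\mathrm{stop}}}, \hat\x_{0:t_{\mathrm{stop}}}, a_{0:t_{\mathrm{stop}}})$ is identical under the two conditional distributions, and marginalising at $t_{\mathrm{stop}}$ yields equality of $P_{t_{\mathrm{stop}}|\mathrm{fault}}$ and $P'_{t_{\mathrm{stop}}|\mathrm{fault}}$, so the total variation distance vanishes.

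Claim (iii) is then immediate from \cref{thm:endtoend}: substituting $\mathrm{TV}(P_{t_{\mathrm{stop}}|\mathrm{fault}}, P'_{t_{\mathrm{stop}}|\mathrm{fault}}) = 0$ into the end-to-end inequality collapses it to $1 - \delta - 1/(|\calA| + 1)$. The main obstacle I anticipate is formalising the distributional equality carefully: the two monitors give rise to different stochastic processes on possibly different underlying probability spaces, and the argument relies on a coupling-style observation that, on the conditioning events, both processes are measurable functions of the same nominal-MPC prefix driven by the common $P_\rho$, dynamics, and observations. Once this is spelled out, the remaining steps are routine applications of \cref{thm:mpc-safety} and \cref{thm:endtoend}.
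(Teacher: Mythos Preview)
The paper states this corollary without an explicit proof, treating it as an immediate consequence of \cref{thm:mpc-safety} and \cref{thm:endtoend}; your proposal follows precisely this intended route---invoking \cref{thm:mpc-safety} with the privileged monitor for claim (ii), arguing the distributional equality to collapse the TV term, and then reading off claim (iii) from \cref{thm:endtoend}. Your coupling observation (that on both conditioning events the trajectory up to $t_{\mathrm{stop}}$ is governed by the same nominal MPC law) is exactly the content behind the paper's unproved assertion that $P_{t_{\mathrm{stop}}|\mathrm{fault}} = P'_{t_{\mathrm{stop}}|\mathrm{fault}}$, and you are right to flag its careful formalisation as the only nontrivial step.
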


\cref{cor:endtoend} informs the following two-step procedure to yield a provable end-to-end safety guarantee on a fixed environment distribution $P_\rho$. We use this procedure in \cref{sec:simulations}. First, collect $\calD$ using the Fallback-Safe MPC with a ground-truth supervisor $w(\cdot) := 1 - \mathbf{1}\{\pe_t \in \calE_\theta\}$, then deploy the Fallback-Safe MPC with Algorithm \ref{alg:conformal} as the runtime monitor. 
We can then satisfy \cref{def:safety} for a risk tolerance $\delta$ by evaluating Algorithm \ref{alg:conformal} using $\delta' = \delta - 1 / (|\calA| + 1)$. As long as we have sufficient data on failure modes, that is, when $(1 / \delta) - 1 \leq |\calA|$, our runtime monitor will exhibit nontrivial behavior (i.e., that $w$ does not always output $1$). 


\section{SIMULATIONS}\label{sec:simulations}
In this section, we first simulate a simplified example of a quadrotor to illustrate the behavior of the Fallback-Safe MPC framework. We then demonstrate the efficacy of the conformal algorithm, and the resulting end-to-end safety guarantee, in the photo-realistic X-Plane 11 aircraft simulator. For a detailed description of our simulations, including e.g., the specific cost functions used, we refer the reader to 
\cite{SinhaSchmerlingEtAl2023}.

\textbf{Planar Quadrotor:} We consider a planar version of the quadrotor dynamics for simplicity, with 2D pose $\bm{p} = [x, y, \theta]^T$, state $\x = [\bm{p}^T, \dot{\bm{p}}^T]^T$, and front and rear input thrust inputs $\fu = [u_f, u_r]^T$ \cite{Tedrake2021}. We linearize the the dynamics around $\bar{\x} = 0$ and $\bar{\fu} = \frac{mg}{2}[1,1]^T$, and discretize the dynamics using Euler's method with a time step of $\mathrm{dt} =0.15s$. The drone is subject to bounded wind disturbances, so that the drone may drift with $\approx 0.33 m / s$ in the $x-$direction without actuation. In our example, the drone has internal sensors to estimate its orientation and velocity, so that $\y = [\theta, \dot{\bm{p}}^T]^T$. The drone estimates its $xy-$position using a hypothetical vision sensor. To do so, we nominally simulate that the perception system's $xy-$position estimate is within a $10 \mathrm{cm}$-wide box around the true position, and perfectly outputs $\y$. When the vision system fails, we randomly sample the $xy-$position within the range $(-10,10)$. In these simulations we give the Fallback-Safe MPC a perfect runtime monitor, so that $w(\cdot) = 1 - \mathbf{1}\{\pe_t \in \calE\}$. We implement the Fallback-Safe MPC using the tube MPC formulation in 
\cite{SinhaSchmerlingEtAl2023}.

First, in \cref{fig:landing}, we simulate a scenario where the drone attempts a vision-based landing at the origin. Here, the state constraint is not to crash into the ground ($y\leq0$). We set the recovery policy to $\pi_R(\y):= \epsilon + K \y$, where $K$ stabilizes the orientation $\theta$ around $0$ and $\epsilon$ is a small offset to continually fly upward, choosing the recovery set to allow the drone to fly away starting from a sufficient altitude.
We verify that under the recovery policy, the recovery set is invariant under both the estimator dynamics \cref{eq:est-dyn} in nominal conditions and the state dynamics \cref{eq:dynamics}, so the Fallback-Safe MPC is recursively feasible by \cref{thm:mpc-safety}. We compare the Fallback-Safe MPC with a 
naive tube MPC that optimizes
only a single trajectory and assumes perception is always reliable (i.e., it assumes perception errors always satisfy \cref{eq:error})
As shown in \cref{fig:landing}, the Fallback-Safe MPC plans fallback trajectories that safely abort the landing and fly away into open space. In contrast, the naive tube MPC does not reason about perception faults, and crashes badly when the perception fails starting at $t = 10\mathrm{dt}$. Therefore, this example demonstrates the necessity of planning with a fallback. 
\begin{figure}[t]
    \centering
    \includegraphics[width=\textwidth]{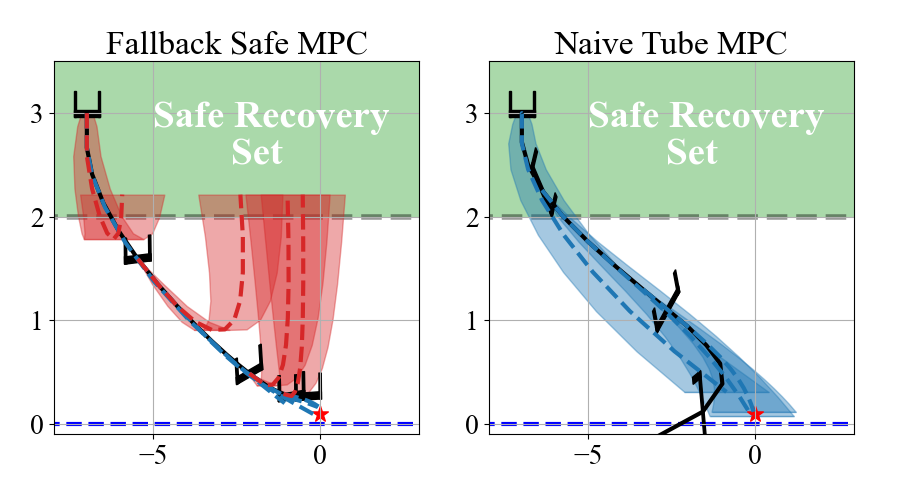}
    \caption{Trajectories of the planar quadrotor in the $xy$-plane. The realized quadrotor trajectories are in black, and the icons show the orientation of the quadrotor at every $k=7$ time steps. The safe recovery set is highlighted in green. The blue-dashed line indicate the state constraint. Left: In red, we plot the predicted reachable sets of the fallback strategy and in blue, we plot the predicted nominal trajectories, both at $k=7$ step intervals. Right: In blue, we plot the predicted reachable tubes of the Naive Tube MPC at $k$ step intervals.}
    \label{fig:landing}
    \vspace{-0.5cm}
\end{figure}

Secondly, we simulate a scenario where the drone must navigate towards an in-air $xy$ goal location while remaining within a box in the $xy$-plane. Here, when the drone loses its vision, it is no longer possible to avoid the boundaries of $\calX$ using only the fallback measurement $\y$. Instead, as in the example in \cref{fig:hero-fig}, our recovery set is to land the drone. To model the drone as having landed, we modify the dynamics to freeze the state for all remaining time once the state $\x$ enters the $y \leq 0$ region with low velocity. However, in this example, the drone must cross an unsafe ground region, such as the busy road in the example in \cref{fig:hero-fig}, specified as the region of states with $|x| < 1, \ y\leq 0$. Therefore, we take the recovery set $\calX_R$ as all landed states with $|x|\geq 1$. 
\begin{figure}[b]
    \centering
    \includegraphics[width=\textwidth]{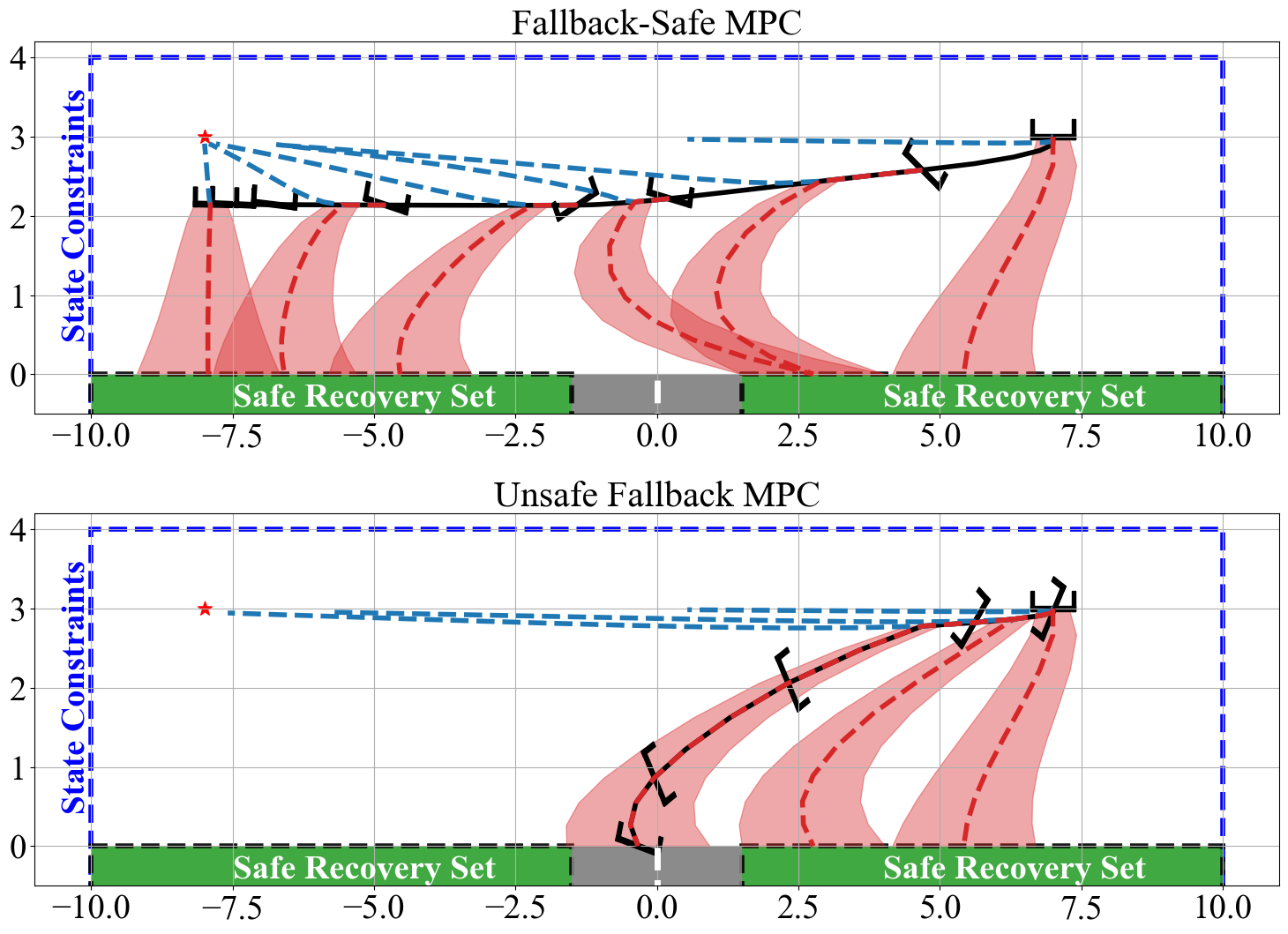}
    \caption{Trajectories of the planar quadrotor in the $xy$-plane. The disjoint safe recovery sets are highlighted in green, the unsafe ground region (e.g., a road), $|x|<1, \ y\leq0$, is in gray. Both the top and bottom figure follow the layout in \cref{fig:landing} (left).}
    \label{fig:goal_loc}
\end{figure}
Clearly, $\calX_R$ is a safe recovery set for $\pi_R(\y) = 0$, under the true dynamics \cref{eq:dynamics}.\footnote{\label{fn:infeas} We note that in this example, $\calX_R$ is not RPI under the state estimate dynamics \cref{eq:est-dyn} in nominal conditions, because the estimation error bound allows $\hat{\x}$ to leave the $\calX_R$ even if $\x \in \calX_R$. To retain the safety guarantee, we also trigger the fallback if \cref{eq:modification-mpc} is infeasible, a simple fix first proposed in \cite{KollerBerkenkampEtAl2018}. We did not observe recursive feasibility issues in the simulations.} For the drone to cross the road, we need to maintain recoverability with respect to either of the two disjoint recovery sets. We compare our approach with another naive baseline, which we label the Unsafe Fallback MPC, that executes a nominal MPC policy and naively tries to compute a fallback trajectory post-hoc, using the previous estimate before the fault occurred. As shown in \cref{fig:goal_loc}, our Fallback-Safe MPC first maintains feasibility of the fallback with respect to the rightmost recovery set, slows down, and then switches to the leftmost recovery set once a feasible trajectory crossing the road is found. In contrast, the Unsafe Fallback MPC does not maintain the feasibility of the fallback by modifying nominal operations, and is forced to crash land in the unsafe ground region (rather than throwing an infeasibility error, our implementation relies on slack variables). Therefore, this example illustrates that it is necessary to modify nominal operations to maintain the feasibility of a fallback. 


\begin{figure}[t]
    \centering
    \includegraphics[width=\textwidth]{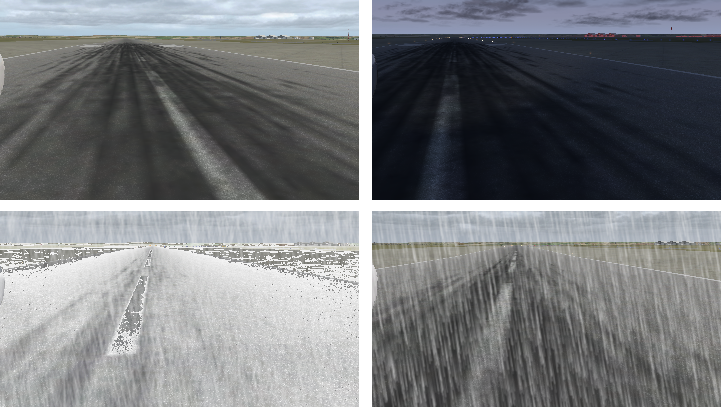}
    \caption{Simulated environments in the X-Plane 11 simulator. Top left: Morning, no weather. Top right: Night, no weather. Bottom left: Afternoon, snowing. Bottom right: Afternoon, raining.}
    \label{fig:xplane}
\end{figure}
\begin{figure}[b]
    \centering
    \includegraphics[width=\textwidth]{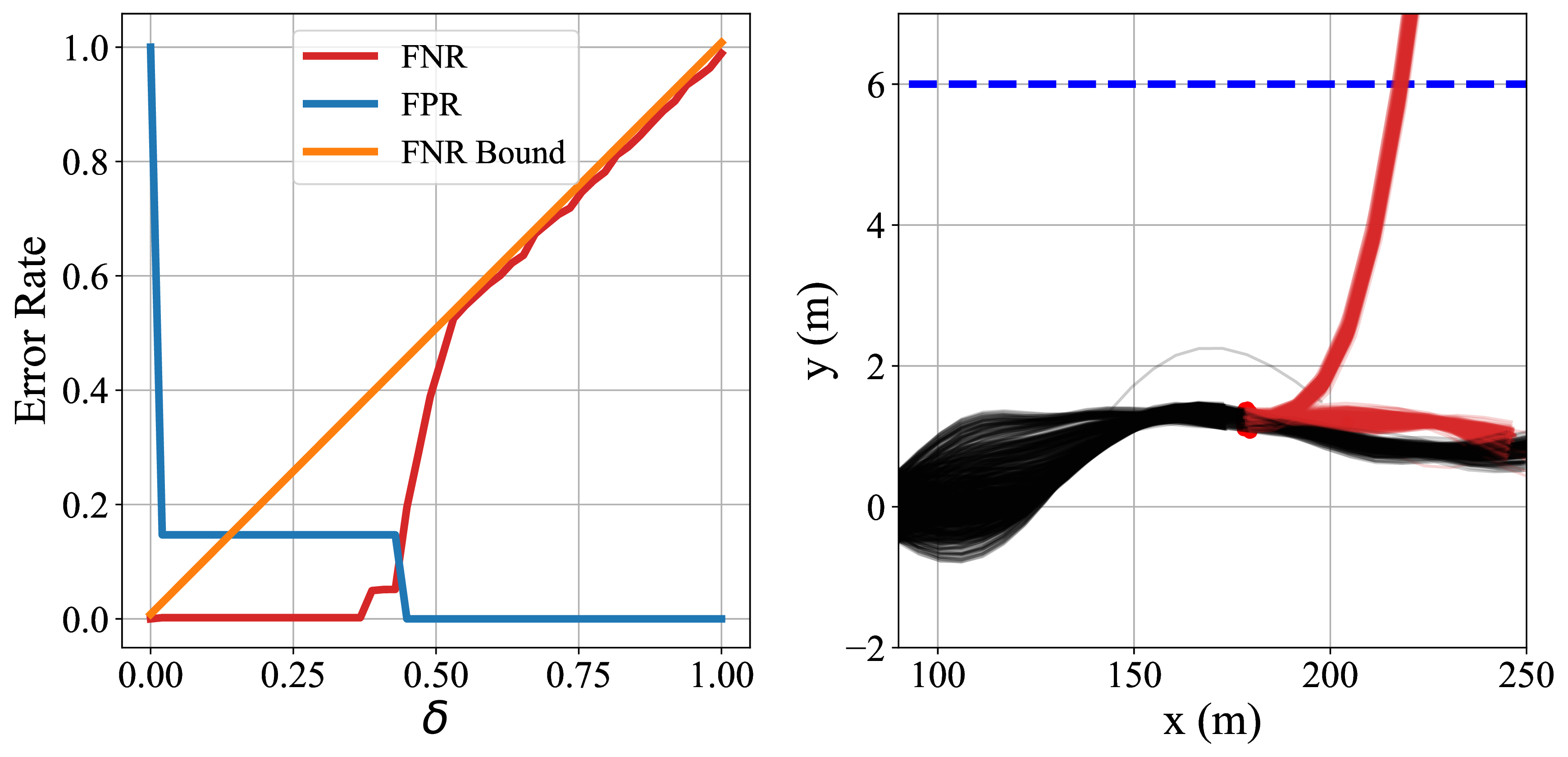}
    \caption{Left: FNR Bound indicates the $\delta + 1 / (|\calA| + 1)$ bound on the FNR from \cref{lem:conformal}. The FPR indicates the empirical rate at which we trigger the fallback without a perception fault ever occurring. The FNR indicates the empirical rate at which a perception fault occurs before we trigger the fallback. Right: closed-loop trajectories of the aircraft are in black. For trajectories in which the fallback triggered, we plot a red dot where the aircraft stopped, and plot in red the trajectory that would have occurred had we not triggered the fallback. }
    \label{fig:xplane-results}
\end{figure}

\textbf{X-Plane Aircraft Simulator:} Finally, we evaluate the conformal prediction Algorithm \ref{alg:conformal} and the end-to-end safety guarantee of our framework using the photo-realistic X-Plane 11 simulator. We simulate an autonomous aircraft taxiing down a runway with constant reference velocity, while using a DNN to estimate its heading error (HE) $\theta$ and center-line distance (cross-track error (CTE)) $y$ from an outboard camera feed. Here, internal encoders always correctly output the velocity $v$, so that $\x = [y, \theta, v]$ and $\y = v$. The aircraft must not leave the runway, given by the state constraint $|y|\leq 6 \mathrm{m}$. 

We train the DNN perception model on $4\times10^4$ labeled images collected only in morning, clear sky weather, but we deploy the system in a context $P_\rho$ where it may experience a variety of weather conditions (depicted in \cref{fig:xplane}). We parameterize the environment $\rho := (\mathrm{weather \ type}, t_{\mathrm{start}}, \mathrm{severity})$ as a triplet indicating the weather type, severity level, and starting time from which the visibility starts to degrade, so that under the environment distribution $P_{\rho}$, we randomly sample an environment that starts with clear-skies and high visibility, but may cause OOD errors during an episode. As shown in \cref{fig:example}, heavy weather degrades the perception significantly. The fallback is to brake the aircraft to a stop, where the stopped states are invariant\textsuperscript{\ref{fn:infeas}} under $\pi_R(\y) := 0$. As in \cite{RichterRoy2017}, we train an autoencoder alongside the DNN on the morning, clear sky data and use the reconstruction error as the anomaly signal $a_t$. We define the perception error set to include at most $7$ degree HE and $1.3\mathrm{m}$ CTE.

We record 100 training trajectories using the Fallback-Safe MPC \cref{eq:modification-mpc} and a ground-truth supervisor $w(\cdot) = 1 - \mathbf{1}\{\pe_t \in \calE\}$ to calibrate Algorithm \ref{alg:conformal}, and then evaluate on 900 test trajectories with environments sampled i.i.d. from $P_\rho$ using and Algorithms \ref{alg:fallback-safe-mpc}-\ref{alg:conformal}. This ensures we satisfy \cref{def:safety} by \cref{cor:endtoend}. We compute the empirical false positive and false negative rate when we evaluate Algorithm \ref{alg:conformal} with various values of $\delta \in [0,1]$ in \cref{fig:xplane-results} (left). As \cref{fig:xplane-results} (left) shows, the FNR of Algorithm \ref{alg:conformal} satisfies \cref{lem:conformal}'s $\delta' = \delta + 1 / (|\calA| + 1)$ bound for all values of $\delta$, validating our guarantees. Moreover, the false positive rate, the rate at which we incorrectly trigger the fallback, is near $0$ for risk tolerances as small as $\delta' = 5\%$. This shows that algorithm \ref{alg:conformal} is highly sample efficient and not overly conservative, since it hardly issues incorrect alarms with orders of magnitudes fewer samples than we needed to train the perception. In \cref{fig:xplane-results} (right), we control the system with an end-to-end safety guarantee of $\delta' = .1$ using our framework and observe no constraint violations. For the trajectories in which we triggered the fallback, \cref{fig:xplane-results} (right) shows that over $80\%$ would have led to an aircraft failure had we not interfered. This shows that our framework is effective at avoiding robot failures, and experiences few unnecessary interruptions with an effective OOD detection heuristic like the autoencoder reconstruction loss.
\section{Conclusion}
In this work, we have formalized the design of safety-preserving fallback strategies under perception failures by ensuring the feasibility of a fallback plan with respect to a safe recovery set, a subset of the state space that we can make invariant without full knowledge of the state. Similar to the terminal invariant in a standard MPC, we have demonstrated that recovery sets can readily be identified offline. Our simulations also showed that the calibration procedure, which enables strong safety assurances, is particularly amenable to limited data collection pre-deployment.
Still, we observe that our runtime monitor occasionally triggers the fallback when the closed-loop system would not have violated safety constraints because we rely on an imperfect heuristic for OOD detection. Therefore, future work should investigate how to tune runtime monitors to only detect downstream failures more effectively. In addition, future work may explore more complex statistical analysis on the runtime monitor since our framework currently does not permit a switch back to nominal operations after a fault occurs. 









\bibliographystyle{IEEEtran-short}
\bibliography{references}

\onecolumn
\begin{appendices}
\crefalias{section}{appendix}
\setcounter{section}{0}
\section{Glossary}\label{ap:glos}

A glossary of all the notational conventions and symbols used in this paper is included in Table \ref{tab:glos}.
\begin{table}[!h]
\adjustbox{height=10cm, center}{
\begin{tabularx}{\textwidth}{c c X}
\hline
   - & Symbol & Description \\
\hline
   & $x$ & unless explicitly defined otherwise, scalar variables are lowercase \\
   & $\mathbf{x}$ & vectors are boldfaced \\
   & $\calX$ & sets are caligraphic \\ 
   & $x_t$ & time-varying quantities are indexed with a subscript $t \in \mathbb{N}_{\geq 0}$\\ 
  & $\x_{0:t}$ & Shorthand to index subsequences: $\x_{0:t} := \{\x_0,\dots, \x_t\}$ \\
   & $\mathbf{R}$ & matrices are uppercase and boldfaced\\
   & $\prob(x \geq 0)$ & probability of the event $x \geq 0$ \\
  Conventions and Notation & $\prob(\calA \ | \ \calB)$ & probability of the event $\calA$ conditioned on the event $\calB$ \\
   & $\lambda, \delta, \epsilon, \theta$ & hyperparameters (regardless of their type) are lowercase Greek characters \\
   & $\x_{t+k|t}$ & Predicted quantities at $k$ time steps into the future computed at time step $t$. Read $x_{t+k|t}$ as ``the predicted value of $x$ at time $t + k$ given time $t$.'' \\
   & $\calX \oplus \calY$ & Minkowski sum operation, defined as $\calX \oplus \calY := \{\x + \y \ : \ \x \in \calX, \ \y \in \calY\}$ \\
   & $\calX \ominus \calY$ & Pontryagin set difference, defined as $\calX \ominus \calY := \{\x \ : \ \x + \y \in \calX, \ \forall \y \in \calY\}$ \\
   & $\mathrm{TV}(P,Q)$ & Total variation distance between distributions $P$ and $Q$\\
   & $\mathbf{1}\{x\}$ & Indicator function, $\mathbf{1}\{\mathrm{True}\} := 1$, $\mathbf{1}\{\mathrm{False}\} := 0$. \\ 
   & $\f(\calX)$ & Shorthand notation to denote the image of the input set $\calX$ under the mapping $\f$. That is, $\f(\calX) := \{\y = \f(\x) \ : \ \x \in \calX\}$ \\
   & $\mathbf{A} \calX$ & Shorthand notation to denote the linear transformation of the elements of $\calX$ with the matrix $\mathbf{A}$.\\ 
    \hline
    \hline
    & $\x$ & System state \\
    & $\fu$ & Input \\
    & $\w$ & Disturbance \\
    & $\f$ & Dynamics \\
    & $\z$ & Perception observation \\
    & $\rho$ & Environment parameter \\
    & $p_\rho$ & Process from which observations and disturbances are sampled during a trajectory \\
    & $P_\rho$& Environment context: The distribution over  environments. \\
    & $\calX$ & State constraint set \\
    & $\calU$ & Input constraint set \\
    & $t_{\mathrm{lim}}$ & Time limit on a trajectory (episode/iteration length) \\
    & $\delta$ & Risk tolerance \\
    & $\hat{\x}$& Perception state estimate \\
    &$\y$& Fallback measurement \\
    &$\g$& Fallback measurement function\\
    Variables &$\calY$& Fallback measurement output set \\
    &$a$& Anomaly/OOD detection signal \\
    &$\pe$& Perception error $\x - \hat{\x}$ \\
    &$\pi_R$& Recovery policy \\
    & $\calX_R$ & Recovery set \\
    &$ \calE_\theta$ & Perception error set \\
    & $\theta$ & Hyperparameter defining the perception error set \\
    & $T$& MPC horizon \\
    & $\fu^F$ & Fallback policy \\
    & $\calP_N$ & Nominal policy set \\
    & $\calP_F$ & Fallback policy set \\
    & $\calR_k(\hat{\x}_t, \fu^F_{0:T})$ & $k-$step reachable set of the true state at $\hat{\x}_t$ under fallback policy $\fu^F_{0:T}$ \\
    & $\widehat \calR_k(\hat{\x}_t, \fu^F_{0:T})$ & $k-$step reachable set of the estimate at $\hat{\x}_t$ under fallback policy $\fu^F_{0:T}$  assuming nominal operation\\
    & $C$ & MPC cost function \\
    & $\fu^\star$ & The starred superscript denotes an optimal input to \cref{eq:modification-mpc}. \\
    & $w$ & Runtime monitor \\
    & $\tau$& Trajectory \\
    &$\calT$& Set of trajectories ($(\calX \times \calU \times \calX \times \R) ^{t_{\mathrm{lim}}}$)\\
    & $\calD$ & Trajectory calibration dataset \\
    & $t_{\mathrm{stop}}$ & Stopping time of a trajectory \\
    & $P$ & Distribution of a trajectory in the calibration dataset \\
    & $P'$ & Distribution of a trajectory at deployment under Algorithm \ref{alg:fallback-safe-mpc} and Algorithm \ref{alg:conformal}. \\
    & $P_{t_{\mathrm{stop}}|\mathrm{fault}}$ & the distribution of $(\x, \hat{\x}, a)$ at $t_{\mathrm{stop}}$ conditioned on the event that $\pe_{t_{\mathrm{stop}}} \not \in \calE_\theta$ under a trajectory sampled from $P$ \\
    & $P_{t_{\mathrm{stop}}|\mathrm{fault}}'$ & the distribution of $(\x, \hat{\x}, a)$ at $t_{\mathrm{stop}}$ under a trajectory sampled from $P'$,  conditioned on both $\pe_{t_{\mathrm{stop}}} \not \in \calE_\theta$ and  $t_{\mathrm{fail}}\geq t_{\mathrm{stop}}$ \\
    \hline
\end{tabularx}
}
\label{tab:glos}
\caption{Glossary of notation and symbols used in this paper.}
\end{table}

\section{Extended Related Works}\label{ap:relwork}
The fact that modern deep learning models behave poorly on OOD data has been extensively documented in recent years. For perception algorithms, this results in models failing in ways that are difficult to anticipate because it is impossible to derive or intuit what makes the pixel values of high-dimensional input image ``different'' to training data from first-principles \cite{GeirhosJacobsenEtAl2020, TorralbaEfros2011, RechtRoelofsEtAl2019}. For example, researchers have shown that vision models often confidently make incorrect classifications on images with unseen classes \cite{OvadiaFertigEtAl2019} or minor corruptions or perturbations \cite{HendrycksDietterich2019} and that they greatly deteriorate in performance when the input domain subtly changes due to e.g., differences in lighting (such as day-night shifts), weather, and background scenery (e.g., across neighboring countries or even between rural and urban scenes) \cite{SharmaAzizanEtAl2021,GeirhosJacobsenEtAl2020, KohSagawaEtAl2021}. In the ML community, this has primarily been studied from a model-centric view by constructing benchmarks to study model degradation and by proposing algorithms that improve domain generalization and domain adaptation algorithms that adapt models in response to data from shifted distributions (see \cite{SinhaSharmaEtAl2022}, and the references therein for an overview). However, few platforms exist to benchmark the degradation of a closed-loop control system that repeatedly applies a learned model in feedback. Moreover, it is a basic fact that we cannot anticipate or robustify against all OOD failure modes \cite{SinhaSharmaEtAl2022, SeshiaSadighEtAl2016}. Therefore, we implement several common OOD scenarios in the photorealistic X-Plane simulator to test our approach, and open-source our benchmark platform as a community resource. 

 In addition, a large number of OOD detection and runtime monitoring algorithms have been proposed to detect when a perception model is operating outside of its competence \cite{RahmanCorkeEtAl2021}. These methods commonly train an additional network to flag inputs that are dissimilar from the training data in a variety of ways, for example, by directly modeling the input distribution, measuring distances in latent-spaces, analyzing autoencoder reconstruction errors, or directly classifying whether inputs are OOD with one-class classification losses (see \cite{SalehiMirzaeiEtAl2021, RuffKauffmanEtAl2021, YangZhouEtAl2021} for an overview). Other methods construct improved uncertainty scores produced by the perception model by approximating the Bayesian posterior through e.g., ensembling \cite{LakshminarayananPritzelEtAll2017}, Monte-Carlo dropout \cite{GalZoubin2016}, choices of architecture or loss functions \cite{AbdarPourpanahEtAl2021, AminiSchwartingEtAl2020, OsbandWenEtAl2023, LiuLinEtAl2020}, or LaPlace approximations \cite{SharmaAzizanEtAl2021}. In general, these algorithms are heuristics that correlate with perception faults, but do not provide formal guarantees of correctness. In fact, recent work proved that the OOD detection problem is not PAC (provably approximately correct) learnable in a number of settings \cite{FangLiEtAl2022}. Moreover, even algorithms that certifiably detect perception faults in certain settings (e.g., by checking consistency in outputs between sensing modalities and across time e.g., see \cite{AntonanteSpivakEtAl2021, RahmanCorkeEtAl2021}) first require us to define what constitutes a perception fault (i.e., ``how bad is too bad''), because state estimates are never exactly correct. Therefore, we jointly design the runtime monitor and control stack by specifying an error bound on nominal perception errors in the control design, and then construct a runtime monitor that uses an OOD detector to predict when that bound is violated.

We take this approach because it is difficult to apply existing work on output-feedback to a perception-enabled system. In particular, a significant body of work constructs controllers that satisfy state and input constraints for all time despite partial observability using robust model predictive control (MPC) (e.g., see \cite{MayneRakovic2006, LorenzettiPavone2020, LovaasSeronEtAl2008,KohlerMuller2021,FindeisenImslandEtAl2003,GoulartKerrigan2006, ManzanoLimonEtAl2020}). A common approach is use knowledge of the dynamics and measurement model with assumptions on system observability to construct state estimators that persistently satisfy known error bounds. They then robustly control the state estimate dynamics using a robust MPC algorithm, tightening constraints to account for the estimation error \cite{MayneRakovic2006, LorenzettiPavone2020, KohlerMuller2021, GoulartKerrigan2006}. However, we cannot model high-dimensional measurements like images from first principles, and as a result, we must rely on black-box neural networks to extract scene information, which may exhibit arbitrarily poor error behavior on OOD inputs. Moreover, in our setting, the measurements that remain reliable after the perception fails are insufficient to recover the state (e.g., i.e., fallback system is not observable). We rely on the ideas of output-feedback MPC to nominally control the robot, but maintain feasibility to a backup plan in case the perception unexpectedly degrades.

In addition, our approach takes inspiration from existing work on \emph{safety filters} that modify an ML-based black-box policy's actions to maintain safety \cite{BrunkeGreefEtAl2022}. Such methods minimally modify nominal decisions to ensure invariance of a safe region of the state space when the black-box policy would take actions to leave that set. Many such algorithms have been proposed in recent years, for example by using robust MPC \cite{WabersichZeilinger2018}, control barrier functions (CBFs) \cite{ChengOrosz2019} and Hamilton-Jacobi reachability \cite{FisacAkametaluEtAl2019, LeungSchmerlingEtAl2019}. However, our method differs from such approaches in two important ways: First, existing safety filters typically operate under assumptions of perfect state knowledge, or that an estimate of sufficient accuracy is always available \cite{BrunkeZhou2022}. In contrast, we must ensure the fallback strategy satisfies full state constraints only using the last accurate estimate and any remaining partial state information. To account for these discrete information modes, our insight is that we can often specify \emph{recovery regions} to fallback into, safe regions of the state space that we can make invariant without full knowledge of the state. For example, the drone in \cref{fig:hero-fig} does not need an obstacle detector to avoid collisions when it is landed in a field. Secondly, existing safety filters take a ``zero-confidence'' view in black-box learned components: They continually ensure the safe operation of the system by aligning an ML-enabled controller's output with those of a backup policy that never uses ML model outputs. We consider a setting where ML perception is critical to achieve the task, and therefore such an approach is much too conservative. Instead, we recognize that ML-enabled components are generally reliable, but leverage OOD detection to transition to a fallback strategy in rare failure modes.

Rather than filtering a black-box learned system's outputs to remain safe, another major line of research has been to construct controllers that ensure learned models operate within their domain of competence \cite{BrunkeZhou2022}. This includes work on policy optimization in (model-based, offline) reinforcement learning (RL) \cite{SchulmanLevineEtAl2015, LevineKumar2020}, where model updates are typically constrained to trust regions to keep trajectories close to existing data, and learning-based control \cite{BrunkeZhou2022, HewingKabzan2020}, where MPC planning is typically combined with Bayesian model-learning to avoid states with high uncertainty in the dynamics. RL algorithms have been extensively applied to vision-based tasks \cite{SchulmanLevineEtAl2015, LevineFinn2016, ThanajeyanBalakrishna2021}. In addition, some recent approaches in learning-based control propose to learn the error behavior of a vision system as a function of the state and then verify closed-loop properties \cite{KatzCorso2022} or robustly plan while taking these error bounds into account \cite{DeanMatni209, ChouOzay2023, IchterLandry2020}, for example, by making smoothness assumptions on the vision's error behavior \cite{DeanMatni209}. However, these algorithms require the environment to remain fixed (i.e., that the mapping from state to image is constant), such that degradation in model quality is solely a result of visiting unseen states due to changes in the control policy. Instead, many of the subtle failure modes of perception systems are caused by environmental changes beyond the control of the robot, like weather changes. Therefore, the robot cannot act to retain confidence in the perception system, so we consider triggering a fallback the only reasonable alternative.

We take inspiration from existing work on fault-tolerant control that maintains feasibility of passive-backup \cite{GuffantiAmico2023}, abort-safe \cite{MarsillachCairano2020}, or contingency plans \cite{AlsterdaBrownEtAl2019} under actuation or sensor failure using MPC, we modify the nominal operation to ensure the existence of a safety preserving fallback. However, in many such works, it is assumed that faults are perfectly detected and that these systems function perfectly nominally. It is challenging to detect failures in ML-based systems and, as illustrated in \cref{fig:example}, errors are nominally tolerable, but nonzero. Therefore, we jointly design the control stack and runtime monitor to account for such errors. Most closely related to our approach are several applied works that trigger a fallback controller by thresholding an OOD detector in the robotics field \cite{RichterRoy2017, FilosTigas2020, McAllisterKahn2019}. These works use fallbacks that are domain-specific or assumed to be safe, assume the ML models function perfectly nominally, and rely on OOD detectors without end-to-end guarantees of system safety. We formalize the design of fallback strategy through the definition of recovery sets and make end-to-end guarantees.

To make an end-to-end guarantee, we leverage recent results in conformal prediction. This is because traditional methods for fault detection, such as model-based approaches, are difficult to apply to vision failures. Instead, we learn how to rely on a heuristic OOD detector through a conformal inference procedure (see \cite{AngelopoulosBates2022} for an overview). Conformal methods are attractive in this setting because they produce strong guarantees on the correctness of predictions with very little data and those guarantees are distribution-free -- that is, they do not depend on assumptions on the data-generating distribution \cite{AngelopoulosBates2022, BalasubramanianHoEtAll2014}. However, such methods have not generally been proposed for making high-probability guarantees jointly for all time in a sequential prediction setting, where inputs are correlated over time. While some recent work has aimed to move beyond the i.i.d. setting \cite{BarberCandes2023, TibshiraniBarber2019}, or making sequentially valid predictions across i.i.d. trajectories \cite{LuoSinha2023}, these cannot be applied sequentially over correlated observations within a trajectory. Instead, we adapt an existing algorithm, \cite{LuoZhao2023}, to the sequential prediction setting by noticing we only require a guarantee on the first detection of a fault to ensure the end-to-end safety of our framework.

\section{Proof of \cref{lem:reachable}}\label{ap:reachable}
\reachsets*

\begin{proof}
    First, we prove that $\hat{\x}_{t+k} \in \widehat{\calR}_k$ for all $k\in \{0, \dots T+1\}$. We proceed by induction.  Here, we drop arguments to $\calR$ for notational simplicity (i.e., $\calR_0 := \calR_0(\hat{\x}_t, \fu_{0:T}^F)$). As the base case, note that $\hat{\x}_t \in \widehat\calR_0$ by definition.
    For the inductive step, assume $\hat{\x}_{t+k} \in \widehat{\calR}_{k}$ for some $k \in \{0, \dots, T\}$.
    Since 1) we assume $\pe_{t:t+T+1} \subset \calE_\theta$ and 2) the estimated state follows the dynamics in \cref{eq:est-dyn}, that is, that $\hat{\x}_{t+k+1} = \f(\hat{\x}_{t+k} + \pe_{t+k}, \fu_{k}^F(\g(\hat{\x}_{t+k} + \pe_{t+k})), \w_{t+k}) - \pe_{t+k+1}$, it holds that $\hat{\x}_{t+k+1} \in \widehat{\calR}_{k+1}$ by construction. Therefore, $\hat{\x}_{t+k} \in \widehat{\calR}_{t+k}$ for all $k \in \{0,\dots, T+1\}$.

    Next, we prove that $\widehat{\calR}_{k}(\hat{\x}_{t+1}, \fu_{1:T}^F) \subseteq \widehat{\calR}_{k+1}(\hat{\x}_t, \fu_{0:T}^F)$ for $k \in \{0, \dots, T\}$ by induction. As the base case, note that $\hat{\x}_{t+1} \in \hat{\calR}_{1}(\hat{\x}_t, \fu_{0:T}^F)$. Therefore, $ \{\hat{\x}_{t+1}\} = \widehat{\calR}_{0}(\hat{\x}_{t+1}, \fu_{1:T}^F) \subseteq \widehat{\calR}_{1}(\hat{\x}_t, \fu_{0:T}^F)$. For the inductive step, assume that $\widehat{\calR}_{k}(\hat{\x}_{t+1}, \fu_{1:T}^F) \subseteq \widehat{\calR}_{k+1}(\hat{\x}_t, \fu_{0:T}^F)$ for some $k \in \{0, \dots,  T - 1\}$. Suppose $\hat{\x}' \in \widehat \calR_{k+1}(\hat{\x}_{t+1}, \fu_{1:T}^F)$. Then, by definition, $\hat{\x}' = \f(\hat{\x} + \pe, \fu_{k+1}^F(\g(\hat{\x} + \pe)), \w) - \pe'$ for some $\hat{\x}\in \widehat\calR_{k}(\hat{\x}_{t+1}, \fu_{1:T}^F)$, $\w \in \calW$,  and $\pe, \ \pe' \in \calE_\theta$. This immediately implies that $\hat{\x}' \in \widehat\calR_{k+2}(\hat{\x}_{t}, \fu_{0:T}^F)$, since we assume $\widehat\calR_{k}(\hat{\x}_{t+1}, \fu_{1:T}^F) \subset \widehat\calR_{k+1}(\hat{\x}_{t}, \fu_{0:T}^F)$. Therefore, $\widehat \calR_{k+1}(\hat{\x}_{t+1}, \fu_{1:T}^F) \subseteq \widehat\calR_{k+2}(\hat{\x}_{t}, \fu_{0:T}^F)$. By induction, we therefore have that 
    \begin{equation}\label{eq:proof-inclusion}
        \widehat{\calR}_{k}(\hat{\x}_{t+1}, \fu_{1:T}^F) \subseteq \widehat{\calR}_{k+1}(\hat{\x}_t, \fu_{0:T}^F), \quad \forall k \in \{0, \dots, T\}.
    \end{equation}

    Finally, \cref{eq:proof-inclusion} immediately implies that $\calR_k(\hat{\x}_{t+1}, \fu_{1:T}^F) \subseteq \calR_{k+1}(\hat{\x}_t, \fu_{0:T}^F)$ for all $k \in \{0, \dots, T\}$. Moreover, since $0 \in \calE_\theta$, we have that $\widehat\calR_k(\hat{\x}_t, \fu_{0:T}^F) \subseteq \calR_k(\hat{\x}_t, \fu_{0:T}^F)$ for $k \in \{0, \dots, T+1\}$. In addition, since 1) we proved $\hat{x}_{t+k} \in \widehat\calR_k(\hat{\x}_t, \fu_{0:T}^F)$ and 2) we assume $\pe_{t:t+T+1}\subset \calE_\theta$, we have that $\x_{t+k} \in \calR_k(\hat{\x}_t, \fu_{0:T}^F)$ for all $k \in \{0, \dots, T+1\}$.
\end{proof}

\section{Proof of \cref{thm:mpc-safety}}\label{ap:mpc-safety}
As a shorthand in the theorem statement and the following proofs, we use the notation $t_{\mathrm{fail}}$ to denote the first time step that the runtime monitor raises an alarm, i.e., that $w(a_t) = 0$ for all $ t < t_{\mathrm{fail}}$ and that $w(a_{t_{\mathrm{fail}}}) = 1$ if $t_\mathrm{fail} < \infty$.
To prove \cref{thm:mpc-safety}, we first prove the recursive feasibility of the MPC \cref{eq:modification-mpc} in Algorithm \ref{alg:conformal} up to the failure time $t_{\mathrm{fail}}$. 

\begin{lemma}\label{lem:recfeas}
    Consider the closed-loop system formed by the dynamics \cref{eq:dynamics} and the Fallback-Safe MPC (Algorithm \ref{alg:fallback-safe-mpc}). Suppose that $\pi_R \in \calP_F$, and that the runtime monitor $w$ does not miss a detection of a perception fault, i.e., that $\x_t - \hat{\x}_t \in \calE_\theta$ for all $t < t_{\mathrm{fail}}$. Then, if the Fallback-Safe MPC problem \cref{eq:modification-mpc} is feasible at $t = 0$ and $w(a_0) = 0$, we have that 1) the MPC problem \cref{eq:modification-mpc} is feasible for all $t < t_{\mathrm{fail}}$ and that 2) the closed-loop system  satisfies $\x_t \in \calX$, $\fu_t \in \calU$ for all $t < t_{\mathrm{fail}}$.
\end{lemma}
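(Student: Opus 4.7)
The plan is a standard induction on the time index $t$, adapting the classical recursive feasibility argument for tube MPC (cf.\ \cite{MayneRakovic2006}) to our fallback-augmented scheme. The base case $t=0$ is immediate from the assumed feasibility of \cref{eq:modification-mpc} at $t=0$, and at that step the applied input $\fu_0 = \fu^\star_{0|0}(\hat{\x}_0) = \fu^{F,\star}_{0|0}(\y_0)\in\calU$ by the coupling constraint and the fallback input-constraint satisfaction for $k=0$. Likewise $\x_0\in\calR_0(\hat{\x}_0,\fu^{F,\star}_{0:T|0})\subseteq\calX$ because $\pe_0\in\calE_\theta$ puts $\x_0\in\{\hat{\x}_0\}\oplus\calE_\theta=\calR_0$.

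For the inductive step, assume \cref{eq:modification-mpc} is feasible at some $t<t_{\mathrm{fail}}-1$ with optimizer $(\fu^\star_{t:t+T|t},\fu^{F,\star}_{t:t+T|t})$. Since the runtime monitor has not triggered, $\pe_t\in\calE_\theta$, so the actual input $\fu_t=\fu^\star_{t|t}(\hat{\x}_t)=\fu^{F,\star}_{t|t}(\y_t)$ is one of the inputs used to define $\widehat{\calR}_1$, which implies $\hat{\x}_{t+1}\in\widehat{\calR}_1(\hat{\x}_t,\fu^{F,\star}_{t:t+T|t})$ and $\x_{t+1}\in\calR_1(\hat{\x}_t,\fu^{F,\star}_{t:t+T|t})\subseteq\calX$. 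I would then construct a candidate for the MPC at $t+1$ by the usual shift-and-append recipe: set the candidate fallback sequence to $\fu^{F,\mathrm{cand}}_{t+1+k|t+1}:=\fu^{F,\star}_{t+1+k|t}$ for $k=0,\dots,T-1$ and $\fu^{F,\mathrm{cand}}_{t+1+T|t+1}:=\pi_R$, which lies in $\calP_F$ by hypothesis. For the nominal candidate, I would invoke the assumption that $\calP_N$ can realize any $(\fu,\hat{\x})\in\calU\times\calX$ to pick $\fu^{\mathrm{cand}}_{t+1|t+1}\in\calP_N$ with $\fu^{\mathrm{cand}}_{t+1|t+1}(\hat{\x}_{t+1})=\fu^{F,\mathrm{cand}}_{t+1|t+1}(\y_{t+1})$, and fill out the remaining nominal entries arbitrarily within $\calP_N$ (which by assumption respects input constraints).

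The verification of the candidate reduces entirely to \cref{lem:reachable} and \cref{as:recovery}. The nested-inclusion part of \cref{lem:reachable} gives $\calR_k(\hat{\x}_{t+1},\fu^{F,\mathrm{cand}}_{t+1:t+T+1|t+1})\subseteq\calR_{k+1}(\hat{\x}_t,\fu^{F,\star}_{t:t+T|t})$ for $k=0,\dots,T$, which is $\subseteq\calX$ for $k<T$ and $\subseteq\calX_R\subseteq\calX$ for $k=T$; the induced image inclusion under $\g$ then transfers the fallback input-constraint satisfaction from the previous horizon to the new one. For the terminal step $k=T+1$, the $T$-step reach set lies in $\calX_R$, and applying $\pi_R$ keeps it there by \cref{as:recovery}, giving $\calR_{T+1}\subseteq\calX_R$; the associated inputs lie in $\calU$ because $\pi_R:\calY\to\calU$. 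The coupling constraint is satisfied by construction of $\fu^{\mathrm{cand}}_{t+1|t+1}$. Finally, constraint satisfaction in closed loop follows in every step from $\x_t\in\calR_0(\hat{\x}_t,\fu^{F,\star}_{t:t+T|t})\subseteq\calX$ and $\fu_t=\fu^{F,\star}_{t|t}(\y_t)\in\calU$, both enforced by the MPC constraints and the standing hypothesis $\pe_t\in\calE_\theta$.

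The main obstacle I anticipate is bookkeeping: keeping straight the two indexing conventions (absolute time $t{+}k$ versus horizon offset $k$), making sure the appended $\pi_R$ is indexed at the right horizon slot, and carefully propagating the image-under-$\g$ inclusion so that the fallback input constraint for the shifted sequence follows from the one at time $t$. A minor subtlety worth spelling out is that the closed-loop state $\x_{t+1}$ in fact coincides with a particular element of $\widehat{\calR}_1$ (take $\pe=\pe_t$, $\w=\w_t$, $\pe'=\pe_{t+1}$), which is what licenses the use of \cref{lem:reachable}'s nested-inclusion property to transport the candidate; no auxiliary disturbance or error rejection argument is needed. Everything else is routine.
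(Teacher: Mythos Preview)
Your proposal is correct and follows essentially the same approach as the paper's proof: induction on $t$, the shift-and-append candidate $\{\fu^{F,\star}_{t+1|t},\dots,\fu^{F,\star}_{t+T|t},\pi_R\}$, and verification via the nested-inclusion clause of \cref{lem:reachable} together with \cref{as:recovery}. Your treatment is in fact more explicit than the paper's on the nominal candidate and the transfer of the fallback input constraint through the image under $\g$, but the skeleton is identical.
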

\begin{proof}
Since we assume $w(a_0) = 0$, we have that $t_{\mathrm{fail}} \geq 1$. Therefore, suppose the MPC \cref{eq:modification-mpc} is feasible at some time $t \in \{0, \dots, t_{\mathrm{fail}} - 1\}$ with optimal fallback policy sequence $\fu_{t:t+T|t}^{F,\star}$. Then, $\x_t \in \calX$, since $\x_t \in \calR_{0}(\hat{\x}, \fu_{t:t+T|t}^{F,\star}) \subseteq \calX$ by \cref{lem:reachable} and the assumption that $\pe_{0:t_{\mathrm{fail}} - 1} \subseteq \calE_{\theta}$. In addition, we have that Algorithm \ref{alg:fallback-safe-mpc} applies the input $\pi(\hat{\x}_t, \y_t) = \fu_{t|t}^\star(\hat{\x}_t) \in \calU$, since we enforce $\fu_{t|t}^\star(\hat{\x}_t) = \fu^{F,\star}_{t|t}(\y_t) $ and $\fu^{F,\star}_{t|t}(\y_t) \in \calU$ by construction in \cref{eq:modification-mpc}.

Next, consider the candidate fallback policy sequence $\fu_{t+1:t+T+1|t}^F := \{\fu_{t+1|t}^{F,\star}, \dots, \fu_{t+T|t}^{F,\star}, \pi_R\}$ for problem \cref{eq:modification-mpc} at time $t+1$. If $t+1 < t_{\mathrm{fail}}$, \cref{lem:reachable} gives us
that $\calR_{k}(\hat{\x}_{t+1},\fu_{t+1:t+T|t}^F) \subseteq \calR_{k+1}(\hat{\x}_{t},\fu_{t:t+T|t}^{F,\star})$ for $k \in \{0, \dots, T \}$, since we assume $\pe_{0:t_{\mathrm{fail}-1}} \subset \calE_\theta$. 
Moreover, since $\calR_{T+1}(\hat{\x}_{t}, \fu_{t:t+T}^{F,\star}) \subseteq \calX_R$, we have that $\calR_{T+2}(\hat{\x}_{t+1}, \fu_{t+1:t+T+1|t}^F) \subseteq \calX_R$ by \cref{as:recovery} and \cref{lem:reachable}. Therefore, the candidate fallback policy sequence $\fu_{t+1:t+T+1|t}^F$ satisfies the state and input constraints in \cref{eq:modification-mpc} at $t+1$. Since we assume there always exists a $\fu \in \calP_N$ such that $\fu(\hat{\x}_{t+1}) = \fu_{t+1|t}^{F,\star}(\y_{t+1})$, problem \cref{eq:modification-mpc} is therefore feasible at time $t+1$. The lemma then holds by induction, since we assume \cref{eq:modification-mpc} is feasible at $t=0$ and that $w(a_0) = 0$.
\end{proof}

\vspace{\baselineskip}
The proof of \cref{thm:endtoend} then follows by combining \cref{lem:recfeas} with the logic that triggers the fallback in  Algorithm \ref{alg:fallback-safe-mpc}.

\vspace{\baselineskip}
\textbf{Proof of \cref{thm:mpc-safety}:}

\fallbacksafe*

\begin{proof}
    By \cref{lem:recfeas}, feasibility of the MPC \cref{eq:modification-mpc} at $t=0$ and $w(a_0) = 0$ implies that the MPC is feasible for all $0 \leq t \leq t_{\mathrm{fail}} - 1$ and that the system satisfies state and input constraints for $t <t_{\mathrm{fail}}$. Moreover, since a) \cref{eq:modification-mpc} is feasible at time step $ t_{\mathrm{fail}} - 1$, b) $\fu_{t_{\mathrm{fail}} - 1|t_\mathrm{fail} - 1}^\star(\hat{\x}_{t_\mathrm{fail} - 1}) = \fu_{t_{\mathrm{fail}} - 1|t_\mathrm{fail} - 1}^{F,\star}(\y_{t_\mathrm{fail} - 1})$ by construction, and c) Algorithm \ref{alg:fallback-safe-mpc} applies the fallback strategy $\fu_{t|t_{\mathrm{fail}} - 1}^{F,\star}$ for $t \in \{t_{\mathrm{fail}}, \dots, t_{\mathrm{fail}} +T - 1\}$, it holds that $\x_{t} \in \calX$ and $\fu_t \in \calU$ for all $t \in \{t_{\mathrm{fail}}, \dots, t_{\mathrm{fail}} +T - 1\}$ by \cref{lem:reachable}. Moreover, feasibility of the MPC \cref{eq:modification-mpc} at $ t_{\mathrm{fail}} - 1$ also implies that $\x_{t_{\mathrm{fail}} +T } \in \calX_R$. Furthermore, \cref{as:recovery} and $0 \in \calE_\theta$ give us that the application of $\pi_R$ for all $t\geq t_{\mathrm{fail}} +T  $ ensures that $\x_t \in \calX_R \subseteq \calX$ and $\fu_t \in \calU$ for all $t\geq t_{\mathrm{fail}} +T$.
\end{proof}

\section{Tractable reformulation of \cref{eq:modification-mpc} for linear-quadratic systems}\label{ap:mpc-linear}
We consider linear systems of the form
\begin{equation}
\begin{aligned}
\x_{t+1} &= \mathbf A\x_t +\mathbf B \fu_t + \w_t \\
\y_t &= \mathbf C\x_t
\end{aligned},
\end{equation}
subject to polytopic constraints on states and inputs, 

\begin{equation*}
    \calX  = \{\x \ : \ \mathbf H_x \x \leq  \mathbf h_x\}, \quad \calU = \{\fu \ : \ \mathbf H_u \fu \leq \h_u\},
\end{equation*}
and a polytopic disturbance and perception error bound of the form
\begin{equation*}
    \calW = \{\w \ : \ \mathbf H_w \w \leq \mathbf h_w\}, \quad  \calE_{\theta} = \{\pe \ : \  \mathbf H_e \pe \leq  \mathbf h_e\}.
\end{equation*}
Let $\bar{\x}_{t} = \hat{\x}_t$ be the nominal state of the fallback trajectory at time $t$ and define the nominal fallback dynamics as
\begin{align*}
    \bar{\x}_{t+1} &= \mathbf A\bar{\x}_t + \mathbf B \bar{\fu}^F_t \\
    \bar{\y}_t &= \mathbf C\bar{\x}_t
\end{align*}
for $t \in \{t, t+T+1\}$ under the open-loop input sequence $\bar\fu_{t:t+T}^F \subset \R^m$. As is normative in robust MPC \cite{MayneRakovic2006}, we consider affine fallback policy sequences that satisfy
\begin{equation}
    \fu^F_{t+k}(\y_t) = \bar\fu^F_{t+k} + \mathbf K (\y_{t+k} - \bar \y_{t+k})
\end{equation}
for $k \in \{0, \dots, T\}$, where $\mathbf K \in \R^{m\times r}$ is a fixed feedback gain specified by a designer. Under this fallback policy sequence, it holds that 
\begin{align*}
    \hat{\x}_{t+1} - \bar{\x}_{t+1} &= (\mathbf A \x_t + \mathbf B\fu^F_t(\y_t) + \w_t - \pe_{t+1}) - (\mathbf A\bar{\x}_t + \mathbf B \bar{\fu}^F_t) \\
    &= \mathbf A (\hat\x_t + \pe_t ) +\mathbf B(\bar \fu_t^F +\mathbf K(\y_t - \bar{\y}_t)) + \w_t - \pe_{t+1} - (\mathbf A\bar{\x}_t +\mathbf B \bar{\fu}^F_t) \\
    &=  \mathbf A (\hat\x_t + \pe_t ) + \mathbf B \mathbf K\mathbf C(\hat{\x}_t + \pe_t - \bar{\x}_t) + \w_t - \pe_{t+1} - \mathbf A\bar{\x}_t \\
    &= (\mathbf A + \mathbf B \mathbf K\mathbf C)(\hat{\x}_t - \bar{\x}_t) + (\mathbf A + \mathbf B\mathbf K\mathbf C)\pe_{t} + \w_t - \pe_{t+1}.
\end{align*}
Therefore, by recursively defining the sets
\begin{align*}
    \calF_0 &:= \{0\} \\
    \calF_{k+1} &:= (\mathbf A + \mathbf B \mathbf K\mathbf C)\calF_k \oplus (\mathbf A + \mathbf B \mathbf K\mathbf C)\calE_\theta \oplus \calW \oplus \calE_{\theta}
\end{align*}
for $k\in \{0,\dots, T\}$ and noting that $\calE_\theta$ is symmetric, we then have that the $k-$step reachable sets under $\fu_{t:t+T}^F$ are given as
\begin{align*}
    \widehat{\calR}_k(\hat{\x}_t, \fu_{t:t+T}^F) &= \{\bar{\x}_{t+k}\} \oplus \calF_k \\
    \calR_k(\hat{\x}_t, \fu_{t:t+T}^F) &=\{\bar{\x}_{t+k}\} \oplus \calF_k \oplus \calE_{\theta}
\end{align*}
for $k \in \{0, \dots, T+1\}$. Therefore, for a linear system, we reformulate \cref{eq:modification-mpc} as 

\begin{equation}\label{eq:linear-mpc}
\begin{aligned}
    \minimize_{\substack{\bar{\fu}^F_{t:t+T|t} \subset \R^m, \\ \fu_{t:t+T|t}\subset \calU}} 
    & \enspace C(\hat{\x}_{t}, \fu_{t:t+T|t}, \bar\fu^F_{t:t+T|t})\\
    \subjectto\enspace & \bar{\x}_{t+k+1|t} = \mathbf A\bar{\x}_{t+k|t} + 
 \mathbf B\bar\fu_{t+k|t}^F \ \quad \quad \forall k \in \{0,\dots, T\},\\
    &\bar{\x}_{t+k|t}  \in \calX \ominus (\calF_k \oplus \calE_\theta)  \quad \quad \quad \quad \quad \  \forall k \in \{0, \dots, T\}, \\
    &\bar\fu^F_{t+k|t} \in \calU \ominus \mathbf K \mathbf C (\calF_k \oplus \calE_\theta) \quad \quad \quad \ \ \forall k \in \{0, \dots, T\}, \\
    & \bar{\x}_{t+T+1|t} \in \calX_R \ominus (\calF_{T+1} \oplus \calE_\theta), \\
    & \fu_{t|t} = \bar\fu^F_{t|t}, \quad \bar{\x}_{t|t} = \hat{\x}_t.
\end{aligned}
\end{equation}
In \cref{eq:linear-mpc}, we specify an objective of the form 
\begin{equation*}
    C(\hat{\x}_{t}, \fu_{t:t+T|t}, \bar\fu^F_{t:t+T|t}) := \sum_{k=0}^T h_N(\bar{\x}^N_{t+k|t}, \fu_{t+k|t}) +  V_N(\bar{\x}_{t+T+1|t}^N), 
\end{equation*}
where we define the nominal state evolution as following $\bar{\x}_{t|t}^N := \hat{\x}_t$ and $\bar{\x}_{t+k+1}^N = \mathbf A \bar{\x}_{t+k}^N + \mathbf B\fu_{t+k|t}$ for a positive (semi) definite quadratic stage cost $h$ and terminal cost $V$. Moreover, since we assume $\calX$, $\calU$, $\calW$, and $\calE_{\theta}$ are polytopes, we can compute the constraint tightening in \cref{eq:linear-mpc} (that is, $\calX \ominus (\calF_k \oplus \calE_\theta)$, $\calU \ominus \mathbf K \mathbf C (\calF_k \oplus \calE_\theta)$, $\calX_R \ominus (\calF_{T+1} \oplus \calE_\theta)$) via linear programming, rendering \cref{eq:linear-mpc} a convex quadratic program (QP) \cite{GoulartKerrigan2006}. In addition, we use \cite[Alg. 10.5]{BorrelliBemporadEtAl2017} to compute the robust invariant set $\calX_R$ given a recovery policy. 

\section{Approximate solution approach to \cref{eq:modification-mpc} for nonlinear systems with Particle MPC \cite{DyroHarrisonEtAl2021}}\label{ap:mpc-particle}

For nonlinear systems, we propose the use of the Particle MPC (PMPC) algorithm \cite{DyroHarrisonEtAl2021} to approximate a solution to \cref{eq:modification-mpc}. In lieu of explicitly computing the $k-$step reachable sets, the PMPC algorithm accounts for uncertainties by sampling $M\in \mathbb{N}_{>0}$ trajectories from an initial belief state and a given control input sequence, and enforcing the state and input constraints along the $M$ sampled trajectories.  Therefore, when we apply the PMPC algorithm, we consider both open-loop nominal input sequences $\fu_{t:t+T|t} \subset \R^m$ and open-loop fallback input sequences $\fu^F_{t:t+T|t} \subset \R^m$ for simplicity. First,  we define a shorthand dynamics term $\h$ to approximate the dynamics of the perception estimate in \cref{eq:est-dyn} as 
\begin{align*}
    \h(\bar{\x}, \fu, \w, \pe, \pe') &:= \f(\bar{\x} + \pe, \fu, \w) - \pe'. 
\end{align*}
Then, at time $t$,  we sample $M$ disturbance sequences as $\{\w_{t+k|t}^j, \pe_{t+k|t}^j, \pe_{t+k|t}^{'j}\}_{j=0}^M$ from $\calW$ and $\calE_\theta$ so that we can approximate the reachable sets $\widehat\calR_k$. We then approximate \cref{eq:modification-mpc} as

\begin{equation}\label{eq:pmpc}
\begin{aligned}
    \minimize_{\substack{{\fu}^{F}_{t:t+T|t} \subset \calU, \\ \fu_{t:t+T|t}\subset \calU}} 
    & \enspace \frac{1}{M}\sum_{j=0}^M C_j(\hat{\x}_t, \fu_{t:t+T|t}, \fu^F_{t:t+T|t})\\
    \subjectto\enspace & \bar{\x}_{t+k+1|t}^j = \h(\bar{\x}_{t+k|t}^j, \fu^F_{t+k|t}, \w_{t+k|t}, \pe_{t+k|t}^j, \pe_{t+k|t}^{'j}) && \forall k \in \{0,\dots, T\}, \ j\in \{0, \dots, M\},\\
    &\bar{\x}_{t+k|t}^j  \in \calX \ominus \calE_{\theta}, \quad \bar{\x}_{t|t}^j = \hat{\x}_t   && \forall k \in \{0,\dots, T\}, \ j\in \{0, \dots, M\}, \\
    & \bar{\x}_{t+T+1|t}^j \in \calX_R \ominus \calE_\theta,   && \forall j\in \{0, \dots, M\},\\
    & \fu_{t|t} = \fu^F_{t|t},
\end{aligned}
\end{equation}
on which the PMPC algorithm applies sequential convex programming (SCP) to yield a locally optimal solution. In \cref{eq:pmpc}, we have tightened the state constraints further to account for the perception error, i.e., the difference between $\calR$ and $\widehat\calR$. Moreover, in \cref{eq:pmpc}, we optimize the sample average of cost functions $C_j$. We draw the cost functions $C_j$ as a sum of stage costs $h$ over a sampled trajectory under the nominal inputs $\fu_{t:t+T|t}$, i.e., as
\begin{align}\label{eq:pmpccost}
    C_j(\hat{\x}_t, \fu_{t:t+T|t}, \fu^F_{t:t+T|t}) = \sum_{k=0}^T h(\bar{\x}^{N,j}_{t+k|t}, \fu_{t+k|t}) + V(\bar{\x}_{t+T+1|t}^{N,j}),
\end{align}
where $\bar{\x}^{N,j}_{t|t} := \hat{\x}_t$ and $\bar{\x}_{t+k+t}^{N,j} = \h(\bar{\x}_{t+k|t}^j, \fu^F_{t+k|t}, \w_{t+k|t}, \pe_{t+k|t}^j, \pe_{t+k|t}^{'j})$ for $k \in \{0,\dots,  T\}$. We emphasize that the disturbance sequences drawn to compute the cost functions are independent from those we use to evaluate the constraints in \cref{eq:pmpc}, i.e., we sample a total of $2M$ disturbance sequences at each time step, but keep notation minimal in \cref{eq:pmpccost}.

\section{Proof of \cref{lem:conformal}}\label{ap:conformal}


\conformal*
\begin{proof} First, we note that the false negative rate is bounded by the probability of a missed detection at $t_{\mathrm{stop}}$, conditioned on the event that Algorithm \ref{alg:conformal} does not raise an alarm at any time $t < t_{\mathrm{stop}}(\tau)$, i.e., that is, that $w(a_t) = 0$ for all  $t \in \{0, \dots, t_{\mathrm{stop}}(\tau) - 1\}$. As in the statement and proof of \cref{thm:mpc-safety}, we use the shorthand $t_{\mathrm{fail}} \geq t_{\mathrm{stop}}$ to denote this event, so that
\begin{align}
    \prob_{}\big(\mathrm{False \ Negative}\big) &:= \prob_{} \big(w(a_t) = 0 \ \forall t \in [0, t_{\mathrm{stop}}(\tau)] \ | \ \pe_{t_{\mathrm{stop}}(\tau)} \not \in \calE_\theta \big) \nonumber\\
    &= \prob_{} \big( w(a_{t_{\mathrm{stop}}(\tau)}) = 0 \ | \ \pe_{t_{\mathrm{stop}}(\tau)} \not \in \calE_\theta, \ t_{\mathrm{fail}} \geq t_{\mathrm{stop}})\prob\big( t_{\mathrm{fail}} \geq t_{\mathrm{stop}} \ | \ \pe_{t_{\mathrm{stop}}(\tau)} \not \in \calE_\theta \big) \nonumber \\
    &\leq \prob_{} \big( w(a_{t_{\mathrm{stop}}(\tau)}) = 0 \ | \ \pe_{t_{\mathrm{stop}}(\tau)} \not \in \calE_\theta, \ t_{\mathrm{fail}} \geq t_{\mathrm{stop}}). \label{eq:condprob}
\end{align}

Next, suppose the deployed trajectory $\tau$ conditioned on $t_{\mathrm{fail}} \geq t_{\mathrm{stop}}$ follows the training trajectory distribution $P$ so that $(\tau \ | \ t_{\mathrm{fail}} \geq t_{\mathrm{stop}}), \ \tau_1, \dots, \tau_N \iid P$, and let $\prob_{\mathrm{iid}}$ denote the probability of an event under this assumption. That is, $\prob_{\mathrm{iid}}(\mathrm{False \ Negative})$ denotes the probability of a false negative when $(\tau \ | \ t_{\mathrm{fail}} \geq t_{\mathrm{stop}}), \ \tau_1, \dots, \tau_N \iid P$. 

Under the i.i.d. assumption, it follows that $(\x_{t_{\mathrm{stop}}(\tau)}, \hat{\x}_{t_{\mathrm{stop}}(\tau)}, a_{t_{\mathrm{stop}}(\tau)})$ and the elements of $\calD_{\mathrm{stop}}$ are also i.i.d. 
Therefore, the sequence $(\pe_{t_{\mathrm{stop}}(\tau)}, a_{t_{\mathrm{stop}}(\tau)}), \ (\pe_{t_{\mathrm{stop}}(\tau_1)}^1, a_{t_{\mathrm{stop}}(\tau_1)}^1), \ \dots, \ 
 (\pe_{t_{\mathrm{stop}}(\tau_N)}^N, a_{t_{\mathrm{stop}}(\tau_N)}^N)$ is exchangeable. By \cite[Proposition 1]{LuoZhao2023}, it then follows that 
\begin{equation}\label{eq:confbound}
    \prob_{\mathrm{iid}} \big( w(a_{t_{\mathrm{stop}}(\tau)}) = 0 \ | \ \pe_{t_{\mathrm{stop}}(\tau)} \not \in \calE_\theta , \ t_{\mathrm{fail}} \geq t_{\mathrm{stop}}\big) \leq \delta + \frac{1}{|\calA| + 1},
\end{equation}
because Algorithm \ref{alg:conformal} implements \cite[Algorithm 1]{LuoZhao2023} using $\calD_{\mathrm{stop}}$ as the calibration dataset, $-a_{t_{\mathrm{stop}}}$ as the \emph{surrogate safety score}, and \emph{true safety score} $f(\pe):= \mathbf{1}\{\pe \in \calE_\theta\}$. 

Finally, we bound the difference between the false negative rate under the assumption that $(\tau \ | \ t_{\mathrm{fail}} \geq t_{\mathrm{stop}}), \ \tau_1, \dots, \tau_N \iid P$ and the true false negative rate as
\begin{align}
    &\prob_{} \big( w(a_{t_{\mathrm{stop}}(\tau)}) = 0 \ | \ \pe_{t_{\mathrm{stop}}(\tau)} \not \in \calE_\theta, \ t_{\mathrm{fail}} \geq t_{\mathrm{stop}}) - \prob_{\mathrm{iid}} \big( w(a_{t_{\mathrm{stop}}(\tau)}) = 0 \ | \ \pe_{t_{\mathrm{stop}}(\tau)} \not \in \calE_\theta, \ t_{\mathrm{fail}} \geq t_{\mathrm{stop}}) \nonumber\\
    &\leq \bigg|\prob_{} \big( w(a_{t_{\mathrm{stop}}(\tau)}) = 0 \ | \ \pe_{t_{\mathrm{stop}}(\tau)} \not \in \calE_\theta, \ t_{\mathrm{fail}} \geq t_{\mathrm{stop}}) - \prob_{\mathrm{iid}} \big( w(a_{t_{\mathrm{stop}}(\tau)}) = 0 \ | \ \pe_{t_{\mathrm{stop}}(\tau)} \not \in \calE_\theta, \ t_{\mathrm{fail}} \geq t_{\mathrm{stop}})\bigg| \nonumber \\
    &\leq \mathrm{TV}(P_{t_{\mathrm{stop}}|\mathrm{fault}}, P'_{t_{\mathrm{stop}}|\mathrm{fault}}). \label{eq:tvbound}
\end{align}

Combining \cref{eq:condprob} with \cref{eq:confbound} and \cref{eq:tvbound} then yields
\begin{align*}
        \prob\big(\mathrm{False \ Negative}\big) &\leq \prob_{} \big( w(a_{t_{\mathrm{stop}}(\tau)}) = 0 \ | \ \pe_{t_{\mathrm{stop}}(\tau)} \not \in \calE_\theta, \ t_{\mathrm{fail}} \geq t_{\mathrm{stop}}) \\
        &= \prob_{\mathrm{iid}} \big( w(a_{t_{\mathrm{stop}}(\tau)}) = 0 \ | \ \pe_{t_{\mathrm{stop}}(\tau)} \not \in \calE_\theta, \ t_{\mathrm{fail}} \geq t_{\mathrm{stop}})  \\
        &+ \prob_{} \big( w(a_{t_{\mathrm{stop}}(\tau)}) = 0 \ | \ \pe_{t_{\mathrm{stop}}(\tau)} \not \in \calE_\theta, \ t_{\mathrm{fail}} \geq t_{\mathrm{stop}}) - \prob_{\mathrm{iid}} \big( w(a_{t_{\mathrm{stop}}(\tau)}) = 0 \ | \ \pe_{t_{\mathrm{stop}}(\tau)} \not \in \calE_\theta, \ t_{\mathrm{fail}} \geq t_{\mathrm{stop}}) \\
        &\leq \delta + \frac{1}{|\calA| + 1} + \mathrm{TV}(P_{t_{\mathrm{stop}}|\mathrm{fault}}, P'_{t_{\mathrm{stop}}|\mathrm{fault}}).
\end{align*}

\end{proof}

\section{Proof of \cref{thm:endtoend}}\label{ap:endtoend}
\endtoend*
\begin{proof}
 First, we note that \cref{lem:conformal} bounds the false negative rate of a runtime monitor constructed using Algorithm \ref{alg:conformal}. Therefore, we use \cref{lem:conformal} to lower-bound the probability that the runtime monitor does not miss a detection of a perception fault as 
 \begin{align*}
     \prob(\pe_t \in \calE_{\theta} \ \forall t < t_{\mathrm{fail}}) &= 1 - \prob(\exists t < t_{\mathrm{fail}} \ \mathrm{s.t.} \ \pe_t \not \in \calE_\theta)\\
     &= 1 - \prob(w(a_t) = 0 \ \forall t \in [0, t_{\mathrm{stop}}] \ \cap \  \pe_{t_{\mathrm{stop}}} \not \in \calE_\theta) \\
     &= 1 - \prob(w(a_t) = 0 \ \forall t \in [0, t_{\mathrm{stop}}] \ | \  \pe_{t_{\mathrm{stop}}} \not \in \calE_\theta) \prob(\pe_{t_{\mathrm{stop}}} \not \in \calE_\theta) \\
     &\geq 1 - \prob(w(a_t) = 0 \ \forall t \in [0, t_{\mathrm{stop}}] \ | \  \pe_{t_{\mathrm{stop}}} \not \in \calE_\theta) \\
     &\geq 1 - \delta - \frac{1}{|\calA| + 1} - \mathrm{TV}(P_{t_{\mathrm{stop}}|\mathrm{fault}}, P'_{t_{\mathrm{stop}}|\mathrm{fault}}).
 \end{align*}

Moreover, \cref{thm:mpc-safety} gives us that 
\begin{equation*}
    \prob(\x_t \in \calX, \ \fu_t \in \calU \ \forall t \in [0 , t_{\mathrm{lim}}] \ | \ \pe_t \in \calE_{\theta} \ \forall t < t_{\mathrm{fail}}) = 1.
\end{equation*}
As a shorthand, let $\mathrm{SAFE}$ denote the event that $\x_t \in \calX, \ \fu_t \in \calU \ \forall t \in [0 , t_{\mathrm{lim}}]$.
By applying the law of total probability, it then follows that 
\begin{align*}
    \prob(\mathrm{SAFE}) &\geq \prob(\mathrm{SAFE} \ | \ \pe_t \in \calE_{\theta} \ \forall t < t_{\mathrm{fail}}) \prob(\pe_t \in \calE_{\theta} \ \forall t < t_{\mathrm{fail}}) \\
    &=  \prob(\pe_t \in \calE_{\theta} \ \forall t < t_{\mathrm{fail}}) \\
    &\geq 1 - \delta - \frac{1}{|\calA| + 1} - \mathrm{TV}(P_{t_{\mathrm{stop}}|\mathrm{fault}}, P'_{t_{\mathrm{stop}}|\mathrm{fault}}).
\end{align*}
 
\end{proof}

\section{Simulation Details}\label{ap:sim}
We include additional details about our simulations in this section. 

\subsection{Planar Quadrotor}
We consider a planar version of the quadrotor dynamics for simplicity, with 2D pose $\bm{p} = [x, y, \theta]^T$, state $\x = [\bm{p}^T, \dot{\bm{p}}^T]^T$, and front and rear input thrust inputs $\fu = [u_f, u_r]^T$ \cite{Tedrake2021}. The disturbance-free dynamics are given as:
\begin{align*}
    \ddot x &= -\frac{1}{m}\sin(\theta)(u_f + u_r) ) \\
    \ddot y &= \frac{1}{m}\cos(\theta)(u_f - u_r) - g \\
    \ddot \theta &= \frac{l}{I}(u_f - u_r)
\end{align*}
We linearize the the dynamics around $\bar{\x} = 0$ and $\bar{\fu} = \frac{mg}{2}[1,1]^T$, and discretize the dynamics using Euler's method with a time step of $\mathrm{dt} =0.15s$. The drone is subject to wind disturbances in the set $\calW = \{\w: (-0.05, -0.02, -1e-3, -1e-3, -1e-3,-1e-3) \leq \w \leq (0.05, 0.02, 1e-3, 1e-3, 1e-3,1e-3)\}$, so that the drone may drift with $\approx 0.33 m / s$ in the $x-$direction without actuation. In our example, the drone has internal sensors to estimate its orientation and velocity, so that $\y = [\theta, \dot{\bm{p}}^T]^T$. The drone estimates its $xy-$position using a hypothetical vision sensor. To do so, we simulate the nominal perception errors as bounded within the set $\calE = \{\pe = \x - \hat{\x} \ : \ \hat{\dot{\bm{p}}} =\bm{p}, \ \hat{\theta} = \theta, \ \|[\hat x,\hat y] - [x,y]\|_{\infty} \leq 0.05 \}$, so that the perception system estimates the $xy-$position within a $10 \mathrm{cm}$-wide box around the true position nominally, and perfectly outputs $\y$. When the vision system fails, we randomly sample the $xy-$position within the range $(-10,10)$. We set the time limit $t_{\mathrm{lim}} = 8 \mathrm{s}$, and the controller horizon to $T=10$. In these simulations we give the Fallback-Safe MPC a perfect runtime monitor, so that $w(\cdot) = 1 - \mathbbm{1}\{\pe_t \in \calE\}$. We implement the Fallback-Safe MPC using the tube MPC formulation in \cref{ap:mpc-linear}, where we place a quadratic distance penalty $P = I_{6\times6}$ to a goal location on the nominal trajectory and a quadratic cost $R = I_{2\times2}$ on each input.

First, in \cref{fig:landing}, we simulate a scenario where the drone attempts a vision-based landing at the origin. Here, the state constraint is not to crash into the ground: $\calX := \{\x \ :\ y \geq 0\}$. The fallback is for the drone to abort the landing and fly away at a sufficient altitude. So, we take the recovery set as $\calX_R = \{\x \ : \ \y \geq 2\}$. We note that under the recovery policy $\pi_R(\y):= \epsilon + K \y$, where $K$ stabilizes the orientation $\theta$ around $0$ and $\epsilon$ is a small offset to continually fly upward, the recovery set is invariant under both the estimator dynamics \cref{eq:est-dyn} and the state dynamics \cref{eq:dynamics}, so the Fallback-Safe MPC is recursively feasible by \cref{thm:mpc-safety}. We compare the Fallback-Safe MPC with a naive tube MPC that optimizes only a single trajectory and does not anticipate vision failures. As shown in \cref{fig:landing}, the Fallback-Safe MPC plans fallback trajectories that safely abort the landing and fly away into open space. In contrast, the naive tube MPC does not reason about perception faults, and crashes badly when the perception fails starting at $t = 10\mathrm{dt}$. Therefore, this example demonstrates the necessity of planning with a fallback. 


Secondly, we simulate a scenario where the drone must navigate towards the in-air $xy$ goal location $x_g = [-8,3]$ the box state constraint $\calX = \{\x \ : \ -10 \leq x \leq 10,\ 0 \leq y \leq 4\}$. Here, when the drone loses its vision, it is no longer possible to avoid the boundaries of $\calX$ using only the fallback measurement $\y$. Instead, as in the example in \cref{fig:hero-fig}, our recovery set is to land the drone. To model the drone as having landed, we modify the dynamics to freeze the state for all remaining time once the state $\x$ enters the $y \leq 0$ region with $|\theta| \leq \pi/12$ and $\|\dot{p}\|_{\infty} \leq .1$. However, in this example, the drone must cross an unsafe ground region, such as the busy road in the example in \cref{fig:hero-fig}, specified as the region of states with $|x| < 1, \ y\leq 0$. Therefore, we take the recovery region $\calX_R$ as all landed states with $|x|\geq 1$. 

Clearly, $\calX_R$ is a safe recovery region under $\pi_R(\y) = 0$ under the true dynamics \cref{eq:dynamics}. We note that in this example, $\calX_R$ is not RPI under the state estimate dynamics \cref{eq:est-dyn} in nominal conditions, because the estimation error bound allows $\hat{\x}$ to leave the $\calX_R$ even if $\x \in \calX_R$. To retain the safety guarantee, we also trigger the fallback if \cref{eq:modification-mpc} is infeasible, a simple fix first proposed in \cite{KollerBerkenkampEtAl2018}. We did not observe recursive feasibility issues in the simulations. For the drone to cross the road, we need to maintain recoverability with two disjoint recovery regions. Rather than add a mixed-integer constraint to \cref{eq:modification-mpc}, we solve multiple versions of \cref{eq:modification-mpc} at each time step, one for each recovery region, and select the feasible input with lowest cost. We compare our approach with another naive baseline, which we label the Unsafe Recovery MPC, that executes a nominal MPC policy and naively tries to compute a fallback trajectory post-hoc, using the previous estimate before the fault occurred. As shown in \cref{fig:goal_loc}, our Fallback-Safe MPC first maintains feasibility of the fallback with respect to the rightmost recovery region, slows down, and then switches to the leftmost recovery region once a feasible trajectory is found. In contrast, the Unsafe Recovery MPC does not maintain the feasibility of the fallback by modifying nominal operations, and is forced to crash land in the unsafe ground region (rather than throwing an infeasibility error, our implementation relies on slack variables). Therefore, this example illustrates that it is necessary to modify nominal operations to maintain the feasibility of a fallback.

\subsection{X-Plane Aircraft Simulator}
Finally, we evaluate the conformal prediction Algorithm \ref{alg:conformal} and the end-to-end safety guarantee of our framework using the photo-realistic X-Plane 11 simulator. We simulate an autonomous aircraft taxiing down a runway with constant reference velocity, while using a DNN to estimate its heading error (HE) $\theta$ and center-line distance (cross-track error (CTE)) $y$ from an outboard camera feed. Here, internal encoders always correctly output the velocity $v$, so that $\x = [x, y, \theta, v]$ and $\y = v$. The aircraft must not leave the runway, given by the state constraint $|y|\leq 6 \mathrm{m}$. For control, we model the taxiing aircraft as a unicycle, with disturbance-free dynamics following
\begin{align*}
    \dot{x} &= v \cos(\theta) \\
    \dot{y} &= v \sin(\theta) \\
    \dot{\theta} &= u_\theta \\
    \dot{v} &= c_1(u_v - c),
\end{align*}
where $u_\theta$ is the steering command and $c, \ c_1 \in \R$ are constants so that the 
 acceleration command $u_v = 0$ implies maximum braking and $u_v=1$ implies maximum acceleration. We Euler-discretize the dynamics at a timestep $\mathrm{dt} = 1\mathrm{s}$, and control the simulation using both open-loop input sequences for both the nominal trajectory and the fallback trajectory with a horizon of $T=5\mathrm{s}$ and a identity quadratic costs on the tracking performance and actuation, around a reference speed of $5 \mathrm{m/s}$. 

 We train the DNN perception model on $4\times10^4$ labeled images collected only in morning, clear sky weather, but we deploy the system in a context $P_\rho$ where it may experience a variety of weather conditions (depicted in \cref{fig:xplane}). 
Following other works using an earlier version of the X-Plane 11 simulator \cite{KatzCorso2022, KatzCorsoEtAl2021}, we use a simple multi layer perceptron (MLP) as the perception model with a least-squares loss. Specifically, we downsample and grayscale the input image to a $256\times 128$ pixel image and use a 4-layer MLP with 256 hidden units per layer (and 2 outputs). We found as simple architecture like the MLP to be sufficient for the centerline estimation task, with little gain from more complex architectures like CNNs on the training data. 

We parameterize the environment $\rho := (\mathrm{weather \ type}, t_{\mathrm{start}}, \mathrm{severity})$ as a triplet indicating the weather type, severity level, and starting time from which the visibility starts to degrade, so that in the deployment context $P_{\rho}$, we randomly sample an environment that starts with clear-skies and high visibility, but may cause OOD errors during an episode. The corruption types we simulate are:
\begin{center}
\begin{enumerate}
    \item Night-time darkness 
    \item Motion blur
    \item Gaussian noise
    \item Rain
    \item Rain and motion blur
    \item Snow on tarmac
    \item Snowing and snow on tarmac
\end{enumerate}
\end{center}
To sample an environment, we first flip a biased coin so that we decide to sample an environment with OOD weather with probability $1/3$ and experience no OOD scenario with probability $2/3$. We then randomly select one of the corruption types and a severity level in the range $[1,5]$, where $5$ almost completely blocks visibility, and $1$ is a slight difference. Finally, we sample the start time after which the perception starts to degrade in the range $13-19$ seconds. After the start time of the OOD event, we linearly ramp the severity of the corruption from $0$ to the chosen severity level within 5 seconds. 

As shown in \cref{fig:example}, heavy weather degrades the perception significantly. The fallback is to brake the aircraft to a stop, so that the recovery set is given as $\calX_R = \{\x \ : \ v =0, \ |y|\leq 6\}$, which is invariant under $\pi_R(\y) := 0$ (see footnote \ref{fn:infeas}). As in \cite{RichterRoy2017}, we train an autoencoder alongside the perception DNN on the morning, clear sky data and use the least-squares reconstruction error as the anomaly signal $a_t$. For the anomaly detector, we use a symmetric MLP autoencoder with 3 layers of 128 hidden units each and set the dimension of the latent space to 64. We define the perception error set as $\calE = \{\pe = \hat{\x} - \x \ :\ |\hat{\theta} - \theta | \leq 7^{\circ}, \ |\hat{y} - y|\leq 1.3 \mathrm{m}\}$, to include at most $7$ degree HE and $1.3\mathrm{m}$ CTE.

We record 100 training trajectories using the Fallback-Safe MPC \cref{eq:modification-mpc} and a ground-truth supervisor $w(\cdot) = 1 - \mathbbm{1}\{\pe_t \in \calE\}$ to calibrate Algorithm \ref{alg:conformal}, and then evaluate on 900 test trajectories with environments sampled i.i.d. from $P_\rho$ using and Algorithms \ref{alg:fallback-safe-mpc}-\ref{alg:conformal}. This ensures we satisfy \cref{def:safety} by corollary \cref{cor:endtoend}. We compute the empirical false positive and false negative rate when we evaluate Algorithm \ref{alg:conformal} with various values of $\delta \in [0,1]$ in \cref{fig:xplane-results} (left). As \cref{fig:xplane-results} (left) shows, the FNR of Algorithm \ref{alg:conformal} satisfies \cref{lem:conformal}'s $\delta' = \delta + 1 / (|\calA| + 1)$ bound for all values of $\delta$, validating our guarantees. Moreover, the false positive rate, the rate at which we incorrectly trigger the fallback, is near $0$ for risk tolerances as small as $\delta' = 5\%$. This shows that algorithm \ref{alg:conformal} is highly sample efficient and not overly conservative, since it hardly issues incorrect alarms. In \cref{fig:xplane-results} (right), we control the system with an end-to-end safety guarantee of $\delta' = .1$ using our framework and observe no constraint violations. For the trajectories in which we triggered the fallback, \cref{fig:xplane-results} shows that over $80\%$ would have led to an aircraft failure had we not interfered. This shows that our framework is effective at avoiding robot failures, and experiences few unnecessary interruptions with an effective OOD detection heuristic like the autoencoder reconstruction loss.
\end{appendices}

\end{document}